\newtheorem{mythm}{Theorem}
\newtheorem{mydef}{Definition}
\newtheorem{myexample}{Example}
\newtheorem{mylemma}{Lemma}
\newtheorem{myremark}{Remark}
\numberwithin{equation}{section}
\DeclareMathOperator{\diag}{diag}
\DeclareMathOperator{\myspan}{span}
\newcommand*\dif{\mathop{}\!\mathrm{d}}
\title{Dilation theorem via Schr\"odingerisation, with applications to the quantum simulation of differential equations}
\author[1]{Junpeng Hu\thanks{hjp3268@sjtu.edu.cn}}
\author[1,2,4]{Shi Jin\thanks{shijin-m@sjtu.edu.cn}}
\author[2,3,4]{Nana Liu\thanks{nana.liu@quantumlah.org}}
\author[1,2]{Lei Zhang\thanks{lzhang2012@sjtu.edu.cn}}
\affil[1]{School of Mathematical Sciences, Shanghai Jiao Tong University, Shanghai, 200240, P. R. China}
\affil[2]{Institute of Natural Sciences, MOE-LSC, Shanghai Jiao Tong University, Shanghai, 200240, P. R. China}
\affil[3]{University of Michigan-Shanghai Jiao Tong University Joint Institute, Shanghai, China}
\affil[4]{Shanghai Artificial Intelligence Laboratory, Shanghai, China}
\begin{document}
\maketitle

 \begin{abstract}
Nagy's unitary dilation theorem in operator theory asserts the possibility of dilating a contraction into a unitary operator. When used in quantum computing, its practical implementation primarily relies on block-encoding techniques, based on finite-dimensional scenarios. In this study, we delve into the recently devised Schr\"odingerisation approach and demonstrate its viability as an alternative  dilation technique. This approach is applicable to operators in the form of $V(t)=\exp(-At)$, which arises in wide-ranging applications, particularly in solving linear ordinary and partial differential equations. Importantly, the Schr\"odingerisation approach is adaptable to both finite and infinite-dimensional cases, in both countable and uncountable domains. For quantum systems lying in infinite dimensional Hilbert space, the dilation involves adding a single infinite dimensional mode, and this is the continuous-variable version of the Schr\"odingerisation procedure which makes it suitable for analog quantum computing.  Furthermore, by discretising continuous variables, the Schr\"odingerisation method can also be effectively employed in finite-dimensional scenarios suitable for 
qubit-based quantum computing. \end{abstract}

\tableofcontents

\section{Introduction}
\label{sec:introduction}

Quantum computing has gained significant attention in recent years due to its potential to provide (up to) exponential speedups compared to classical computing methods. One key aspect of this is quantum simulation, where the quantum system evolves via Schr\"odinger's equation and its description involves the use of unitary operators which maps one pure quantum state to another pure quantum state. Thus quantum devices are particularly suited to the simulation of unitary dynamics. On the other hand, for non-unitary dynamics, such as open quantum dynamics of quantum systems interacting with an environment, or more general ordinary and partial differential equations, and general linear algebra problems, one needs to find a way to "unitarise" it before it can be implemented in a quantum algorithm. 


The dilation theory offers a significant mathematical insight by showing how objects in a broader category, such as contractions, are intricately connected to better-behaved objects within a narrower category, such as unitaries or isometries. Sz.-Nagy’s unitary dilation theorem states that a contraction on a Hilbert space can be dilated into a unitary on a larger space, which is well understood by the spectral theorem and the dilation is unique up to an isomorphism \cite{nagy2010harmonic}. This powerful concept finds application in diverse areas. The minimal unitary dilation of a single contraction can serve as the basis for the development of operator theory for non-normal operators. It can also be used to define a refined functional calculus on contractions or analyze one-parameter semigroups of operators, and has led to considerable progress in the study of invariant subspaces \cite{shalit2021dilation}. In \cite{yan1977chinese}, Yan proved the conditions for one-parameter operator semigroups to have a cyclic unitary dilation. Shamovich and Vinnikov gave a construction of a dilation of the multi-parameter semigroup of commuting dissipative operators in \cite{shamovich2017dilations}.

The dilation theory provides a way to address non-unitary dynamics. For the purpose of creating a quantum circuit, the Sz.-Nagy dilation theorem allows one to simulate the effect of any non-unitary operator by a unitary quantum gate, because every bounded operator can be made into a contraction which has a unitary dilation. However, due to the large increase of the dimension of the Hilbert space, the computational costs as well the complexity of implementations required by different applications of the dilation theorem need to be specified for actual applications on a quantum computing device. 

The physical implication of the dilation theory is that a physical system that does not evolve under unitary dynamics can be enlarged where the enlarged system as a whole now evolves under unitary dynamics. For a finite $N$-dimensional system, described by an $N$-vector, a construction to demonstrate the dilation theorem requires a minimal increase to a $2N$-dimensional system. This $2N$-dimensional system can be described by a quantum state consisting of $\log_2(2N)=1+\log_2(N)$ quantum bits, or qubits. A single qubit corresponds to a two-level system. A physical system can also be infinite dimensional, for example a laser beam. Quantum infinite-dimensional systems are instead described by quantum modes, or qumodes, sometimes also called continuous-variable quantum systems \cite{braunstein2005quantum}. However, the current dilation theorem cannot be constructively applied for these infinite dimensional systems. In this paper, we show a simple explicit construction for these infinite dimensional systems where the inclusion of a single extra qumode is all that is necessary. This is in fact equivalent to adding a single spatial dimension to the partial differential equation that the infinite dimensional system originally obeys.



Usually, the proof of Nagy's unitary dilation theorem relies on a single example using the matrix method. The dilated matrix is constructed by the non-unitary operator $V$ and defect operator $D_V=\sqrt{I-V^\dagger V}$. It can be seen as a minimal block-encoding \cite{gilyen2019quantum,low2019hamiltonian} of any finite-dimensional matrix using only one ancilla qubit. However, the computational cost of implementing this block-encoding remains to be discussed. One can also construct block-encodings for specific matrices by employing additional ancilla qubits, which has proven to be a valuable tool in quantum computing \cite{gilyen2019quantum,low2019hamiltonian}.

\textit{Schr\"odingerisation} is a simple and generic  procedure designed to convert any linear dynamical system, encompassing both linear ordinary and partial differential equations, into Schr\"odinger-type partial differential equations in a higher space dimension \cite{jin2022quantum,jin2022quantumdetail}. In our study, we establish a direct connection between the Schr\"odingerisation approach and Nagy's unitary dilation theorem by providing a novel and alternative proof of this dilation theorem. We focus our attention on dilating operators of the form $V(t)=\exp(-At)$, which holds significant importance as it applies to solutions of numerous linear evolution equations. To ensure stability, we assume that the real parts of eigenvalues of $A$ are positive. In many cases, $A$ lacks an anti-Hermitian component, which results in $V(t)$ being non-unitary. This connection sheds new light on the understanding and applicability of the dilation theorem in such scenarios.

We demonstrate that the Schr\"odingerisation method can be viewed as a new constructive example of the dilation theorem by defining the dilated space. Prior to this, well-known and easily understandable matrix constructions, as illustrated in Section \ref{sec:bg:proof}, primarily dealt with discrete time instances $t=0,1,2,\dots$. For continuous time $t\in\mathbb{R}$, the dilation theorem had a somewhat theoretical proof that lacked intuition\cite{nagy2010harmonic}. In our proof, it becomes evident how the Schr\"odingerisation method dilates the operator $V(t)$, resulting in a unitary operator $U(t)$ that still forms a continuous semigroup concerning time $t$. Furthermore, this transformation solely relies on quantum Fourier transforms and inverse Fourier transforms, eliminating the need for complex mathematical operations like root-finding.

The Schr\"odingerisation formalism is versatile, naturally applying to continuous-variable quantum modes, enabling analog quantum computation \cite{jin2023analog}. It can also be adapted to qubits by discretising continuous variables. In the infinite-dimensional case, Schr\"odingerisation serves as an exact implementation of the dilation theorem, requiring only one additional quantum mode, representing the Fourier mode $\eta$. Following discretisation, the Schr\"odingerisation method can approximate the dilated unitary within precision $\delta$ using $\mathcal{O}(\log(1/\delta))$ ancilla qubits.

In comparing the Schr\"odingerisation method with block-encoding, we assess the complexities of simulating $V(T), T>0$ for qubit systems. To clarify, we apply the first-order Lie-Trotter decomposition. Notably, it appears that the complexity of the Schr\"odingerisation method increases at a faster rate with respect to the precision $\delta$ due to the presence of the large Fourier mode $\eta$. However, the complexity of block-encoding relies on $\lambda_{0}(A)$, the smallest real part of eigenvalues, which varies with the mesh size and consequently, $\delta$. We conducted specific calculations for the heat equation, in which case both Schr\"odingerisation and block-encoding exhibit comparable complexities.


The paper is organized as follows. In Section \ref{sec:bg}, we present the classical dilation theorem and give an existence proof. Moving on to Section \ref{sec:block}, the definition and some applications of block-encoding are given. In addition, we present a theorem on the block-encoding of the operator $V=e^{-At}$, as an example of the finite-dimensional case. Section \ref{sec:schrodingerisation} provides an overview of the Schr\"odingerisation approach, applicable to both infinite and finite cases for general PDEs. In this section, we also present a complexity theorem in the finite case. Finally, we conclude in Section \ref{sec:discussion} with a discussion of the results.

\textbf{Notations}: Throughout the paper, we use $\tilde{\mathcal{O}}$ to denote $\mathcal{O}$ where logarithmic terms are ignored and denote $f=\Omega(g)$ if $g=\mathcal{O}(f)$.

\section{Background}
\label{sec:bg}
In this section, we revisit the original Nagy's dilation theorem and show a concrete example which serves as a constructive proof of the theorem. The primary mathematical tool in quantum mechanics and dilation theory is the theory of Hilbert spaces. The reader can refer to \cite{griffiths2018introduction} for a brief review of quantum mechanics. To help the readers, we list the definitions of several notations used in this paper.
\begin{itemize}
    \item $\mathcal{H}$ represents a Hilbert space with inner product $\langle\cdot,\cdot\rangle$ consistent with $L^2$ norm. In the quantum mechanics context,  $|\psi\rangle \in \mathcal{H}$ represents a vector in $\mathcal{H}$ (a column vector), $\langle\psi|$ represents a vector of the dual Hilbert space of $\mathcal{H}$ (a row vector), we also use the bra-ket notation $\langle\psi|\phi\rangle \in \mathbb{C}$
 for the inner product of vectors $|\psi\rangle$ and $|\phi\rangle$.
    
    \item $B(\mathcal{H})$ represents the space of bounded operators acting on $\mathcal{H}$, $A^\dagger \in B(\mathcal{H})$ is the Hermitian conjugate of the operator $A \in B(\mathcal{H})$ and $I$ is the identity operator.

    \item $P_{\mathcal{H}} U \in B(\mathcal{H})$ represents the projection of $U \in B(\mathcal{H}_1)$, satisfying $\mathcal{H} \subset \mathcal{H}_1$ and $\langle P_{\mathcal{H}} U h, h^\prime \rangle = \langle U h, h^\prime \rangle$ for all $h, h^\prime \in \mathcal{H}$.
    
    \item $H \in B(\mathcal{H})$ is a Hermitian operator iff $H=H^\dagger$ and $U \in B(\mathcal{H})$ is a unitary operator iff $UU^\dagger = U^\dagger U = I$.
    
    \item $\left\| A \right\|_2$ and $\left\| A \right\|_{\max}$ represent the 2-norm and max-norm of $A\in B(\mathcal{H})$ respectively. $A$ is called a \textit{contraction} if $\left\| A \right\|_2 \leq 1$. $\lambda_0(A)$ denotes the smallest real part of all the eigenvalues of $A\in B(\mathcal{H})$. If $A$ is Hermitian, $\lambda_0(A) = \lambda_{\min}(A)$ is the smallest eigenvalue of $A$.
    
    \item \textit{Hamiltonian} denotes a Hermitian operator. Given a Hamiltonian $H$, the time evolution operator $U(t) = \exp(-iHt)$ is a unitary operator.
    
    \item \textit{qumodes} denotes continuous-variable quantum modes.
\end{itemize}

\subsection{Dilation theorem}
\label{sec:bg:dilation}

\begin{mythm}[Sz.-Nagy’s unitary dilation theorem \cite{nagy2010harmonic}]\label{thm:intro:nagy}
Let $V$ be a contraction on a Hilbert space $\mathcal{H}$, then there exists a Hilbert space $\mathcal{H}_1$ containing $\mathcal{H}$ and a unitary $U$ on $\mathcal{H}_1$, such that 
\begin{equation}
    V^k = P_{\mathcal{H}} U^k, \quad \text{for all } k = 0,1,2,...
\end{equation}
Moreover, when $\mathcal{H}_1$ is chosen as the smallest reducing subspace for $U$ that contains $\mathcal{H}$,
\begin{equation}\label{eqn:intro:nagy:min}
    \mathcal{H}_{1} = \bigvee_{n\in \mathbb{Z}} U^n\mathcal{H} := \overline{\myspan \{U^n\mathcal{H}, n\in\mathbb{Z}\}},
\end{equation}
$(\mathcal{H}_1, U)$ can be identified as a minimal dilation. These conditions determine $U$ up to an isomorphism.
\end{mythm}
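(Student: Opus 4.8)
The plan is to follow the classical two-step strategy: first construct an explicit unitary dilation on a concrete Hilbert space built from $\mathcal{H}$, then pass to the minimal reducing subspace and prove uniqueness up to isomorphism. For the existence part I would take $\mathcal{H}_1 = \bigoplus_{n \in \mathbb{Z}} \mathcal{H}$, thought of as $\ell^2(\mathbb{Z}; \mathcal{H})$, with $\mathcal{H}$ embedded as the $n=0$ summand. The key idea is to use the defect operators $D_V = (I - V^\dagger V)^{1/2}$ and $D_{V^\dagger} = (I - V V^\dagger)^{1/2}$, which are well-defined and positive since $V$ is a contraction. On $\mathcal{H}_1$ one defines $U$ by a block bidiagonal action: informally, $U$ acts as the bilateral shift away from the "seam" between the $n=-1$, $n=0$, and $n=1$ slots, where one inserts the $2\times 2$ operator block $\bigl(\begin{smallmatrix} V & D_{V^\dagger} \\ D_V & -V^\dagger \end{smallmatrix}\bigr)$, which is easily checked to be unitary using the intertwining identity $V D_V = D_{V^\dagger} V$. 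The point is that with this construction $P_{\mathcal{H}} U^k h = V^k h$ for all $h \in \mathcal{H}$ and all $k \ge 0$: applying $U$ repeatedly, the "$V$" entries accumulate along the $n=0$ row while the defect terms leak into strictly positive or negative slots, so projecting back onto the $n=0$ component recovers exactly $V^k$. This verification is the computational heart of the existence argument but is routine linear algebra once the block structure is set up.

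For the minimality statement, I would replace $\mathcal{H}_1$ by $\mathcal{H}_1^{\min} := \bigvee_{n\in\mathbb{Z}} U^n \mathcal{H} = \overline{\myspan}\{U^n h : n \in \mathbb{Z},\ h \in \mathcal{H}\}$. One checks directly that this subspace is reducing for $U$ (i.e. invariant under both $U$ and $U^\dagger = U^{-1}$), which is immediate from its definition as it is visibly invariant under $U^{\pm 1}$ and closed. Restricting $U$ to this subspace still gives a unitary, it still contains $\mathcal{H}$, and the identity $V^k = P_{\mathcal{H}} U^k$ is inherited since $P_{\mathcal{H}}$ only sees the $\mathcal{H}$-component. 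So $(\mathcal{H}_1^{\min}, U|_{\mathcal{H}_1^{\min}})$ is a dilation, and by construction no proper reducing subspace containing $\mathcal{H}$ exists — that is the precise meaning of "minimal."

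The uniqueness (up to isomorphism) is where the real content lies, and I expect it to be the main obstacle. The strategy is: given two minimal dilations $(\mathcal{H}_1, U_1)$ and $(\mathcal{H}_2, U_2)$ of the same contraction $V$, define a map $W$ on the dense set $\myspan\{U_1^n h : n \in \mathbb{Z}, h \in \mathcal{H}\}$ by $W(U_1^n h) := U_2^n h$, and show it is well-defined and isometric. This reduces to verifying that the Gram matrix entries agree, i.e. $\langle U_1^n h, U_1^m h'\rangle = \langle U_2^n h, U_2^m h'\rangle$ for all $n,m \in \mathbb{Z}$ and $h, h' \in \mathcal{H}$. By unitarity both sides depend only on $k := n - m$, so it suffices to show $\langle U_j^k h, h'\rangle$ is the same for $j = 1, 2$ and all $k \in \mathbb{Z}$; for $k \ge 0$ this is $\langle P_{\mathcal{H}} U_j^k h, h'\rangle = \langle V^k h, h'\rangle$, independent of $j$, and for $k < 0$ one takes adjoints, using $\langle U_j^k h, h'\rangle = \overline{\langle h, U_j^k h'\rangle} = \overline{\langle U_j^{-k} h', h\rangle} = \overline{\langle V^{-k} h', h\rangle}$, again independent of $j$. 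Hence $W$ extends to a unitary $\mathcal{H}_1 \to \mathcal{H}_2$; it fixes $\mathcal{H}$ pointwise (the $n=0$ term) and intertwines $U_1$ with $U_2$ since $W U_1 (U_1^n h) = W U_1^{n+1} h = U_2^{n+1} h = U_2 W(U_1^n h)$ on the dense set, hence everywhere by continuity. The subtlety to handle carefully is that $W$ is defined only on a dense subspace and one must check the isometry property before extending — the density of $\myspan\{U^n \mathcal{H}\}$ is exactly what minimality (\ref{eqn:intro:nagy:min}) buys us, so this part genuinely uses the minimality hypothesis.
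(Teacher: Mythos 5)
Your proposal is correct and follows essentially the same constructive route as the paper's Section~\ref{sec:bg:proof}: the Schäffer-type block construction built from the defect operators $D_V$, $D_{V^\dagger}$ and the unitary $2\times2$ block $\bigl(\begin{smallmatrix} V & D_{V^\dagger} \\ D_V & -V^\dagger \end{smallmatrix}\bigr)$, pushed from $\mathcal{H}\oplus\mathcal{H}$ to a doubly-infinite direct sum $\bigoplus_{n\in\mathbb{Z}}\mathcal{H}$, with minimality obtained by restriction to the reducing subspace $\bigvee_{n\in\mathbb{Z}}U^n\mathcal{H}$. The one genuine addition you make is writing out the Gram-matrix argument for uniqueness up to isomorphism, which the paper asserts in the theorem statement but leaves implicit; the paper's Appendix~\ref{sec:appendix:nagy} instead proves existence and minimality for the general group-valued version via an abstract GNS-type construction, again without spelling out uniqueness.
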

In this situation we consider the dilation of a single contraction. It can also be developed in the context of operator semigroups. For instance, if $\{V(t)\}_{t\geq0} \subset B(\mathcal{H})$ is a continuous semigroup, i.e., $V(0)=I$, $V(s+t) = V(s)V(t)$, then there exists a continuous group $\{U(t)\}_{-\infty}^{\infty}$ of unitary operators on $\mathcal{H}_1 \supset \mathcal{H}$ such that
\begin{equation}\label{eqn:bg:dilation:semigroup}
    V(t) = P_{\mathcal{H}} U(t), \quad \text{for all } t \geq 0.
\end{equation}
We introduce the general Nagy's theorem and its constructive proof in the Appendix \ref{sec:appendix:nagy}. For more details, we refer the readers to \cite{nagy2010harmonic}.

Note that Nagy's theorem does not specify whether or not the dimension of $\mathcal{H}$ is finite or infinite, so it should be applicable to both cases. We will look at the finite and infinite cases separately and show how they are represented differently by quantum systems. Namely, in the finite dimensional case, it is dilation of Hilbert space constructed from qubit systems. While in the infinite dimensional case with an uncountable domain, the corresponding infinite-dimensional Hilbert space is constructed from continuous-variable quantum modes, or `qumodes' \cite{braunstein2005quantum, liu2016power}.

Given $\mathcal{H}$ the $n$-dimensional Hilbert space with an orthonormal basis, $h\in \mathcal{H}$ can be represented by an $n$-dimensional vector and the operator $V$ acting on $\mathcal{H}$ can be regarded as an $n\times n$ matrix. Block-encoding is a powerful technique in this case to represent arbitrary matrices using unitary matrices. However, when $\mathcal{H}$ is infinite-dimensional, the operator $V$ {\it cannot} be represented by a finite dimensional matrix. If the operator $\mathcal{H}$ has a spectrum in the uncountable domain, then it cannot be represented by a matrix at all. 

The proof presented here is a constructive proof, which is done by giving a specific example of the dilation.

\subsection{Proof of the dilation theorem}
\label{sec:bg:proof}

Given $V \in \mathcal{H}$ a contraction, that is, $\left\| V \right\| \leq 1$, one can define $D_V = \sqrt{I-V^\dagger V}$ since $I-V^\dagger V \geq 0$. The simple construction
\begin{equation}\label{eqn:nagy:unitarymatrix}
    U := \begin{bmatrix}
        V & D_{V^\dagger} \\ D_{V} & -V^\dagger
    \end{bmatrix}
\end{equation}
gives rise to a unitary operator on $\mathcal{H}_1 = \mathcal{H}\otimes\mathcal{H}$. $\mathcal{H}$ is embedded in $\mathcal{H}_1$ trivially by $h_0 \mapsto \begin{bmatrix} h_0\\ 0 \end{bmatrix}$, and the action of $U$ on it is defined as
\begin{equation}
    U \begin{bmatrix} h_0\\ 0 \end{bmatrix} =  \begin{bmatrix} V h_0\\ D_V h_0 \end{bmatrix}.
\end{equation}
It is easy to check that for $h_0, h_0^\prime$, one has $\langle U h_0, h_0^\prime \rangle = \langle Vh_0, h_0^\prime \rangle$, meaning that
\begin{equation}
    V = P_{\mathcal{H}} U.
\end{equation}
Here $P_{\mathcal{H}}$ can be seen as the orthogonal projection of $\mathcal{H}\otimes\mathcal{H}$ onto $\mathcal{H}\otimes\{0\} \simeq \mathcal{H}$. $U$ can be seen as a $(1,1)-$block-encoding of $V$, which will be discussed in Section \ref{sec:block}.

Similarly, one can define $\mathcal{H}_1 = \mathcal{H}^{N+1} = \mathcal{H}\otimes\cdots\otimes\mathcal{H}$ and consider the following matrix
\begin{equation}
    U := \begin{bmatrix}
        V      & 0      & 0      & \cdots & 0 & D_{V^\dagger} \\ 
        D_{V}  & 0      & 0      & \cdots & 0 & -V^\dagger    \\
        0      & I      & 0      & \cdots & 0 & 0             \\
        0      & 0      & I      &        &   & 0             \\
        \vdots & \vdots &        & \ddots &   & \vdots        \\
        0      & 0      & \cdots & 0      & I & 0
    \end{bmatrix}.
\end{equation}
It is easy to check that $U$ is unitary, thus 
\begin{equation}
    U^k = \begin{bmatrix}
        V^k & * \\ * & *
    \end{bmatrix} \quad 
    \Rightarrow \quad V^k = P_{\mathcal{H}} U^k, \quad k = 0,\dots,N.
\end{equation}
This idea can be pushed further by taking $\mathcal{H}_1 = \otimes_{n\in\mathbb{Z}}\mathcal{H}$ and $U$ to be an infinite operator matrix. The minimal dilation is then obtained by restricting $U$ to the reducing subspace $\bigvee_{n\in \mathbb{Z}} U^n\mathcal{H}$. One can use this example to prove the case $k=0,1,2,\dots$, but for continuous time case \eqref{eqn:bg:dilation:semigroup}, one can give a more abstract proof, as shown in Appendix \ref{sec:appendix:nagy}.

\section{Block-encoding}
\label{sec:block}
Block-encoding is a general method to ``unitarize'' a non-unitary matrix $A\in\mathbb{C}^{n\times n}$ ($n=2^m$). Since it can be difficult to implement a unitary matrix $U_A$ to block encode $A$ exactly, 
it is sufficient if one can find $U_A$ to block encode $A$ up to some error $\delta$. Block-encoding is defined in Definition  \ref{def:block} \cite{chakraborty2018power}.

\begin{mydef}\label{def:block}
Given the $m$-qubit matrix $A$ with $n=2^m$ and a threshold $\delta \geq 0 $, if one can find $\alpha>0$ and an $(l+m)$-qubit unitary matrix $U_{A}$ such that
\begin{equation}
    \left\| A - \alpha (\langle 0^l| \otimes I) U_{A} (|0^l\rangle \otimes I) \right\|_2 \leq \delta,
\end{equation}
then $U_{A}$ is called an $(\alpha, l, \delta)$-block-encoding of $A$. In particular, when the block-encoding is exact with $\delta = 0$, $U_{A}$ is called an $(\alpha,l)$-block-encoding of $A$. 
\end{mydef}

The construction in \eqref{eqn:nagy:unitarymatrix} in Section \ref{sec:bg:proof} guarantees a minimal block-encoding of any arbitrary $n$-dimensional contraction $V$ using only {\it one} ancilla qubit.  From the perspective of constructing quantum circuits, given an initial state $|\psi\rangle \in \mathcal{H}=\mathbb{C}^{2^m}$, one ancilla qubit is used to produce $|0\rangle \otimes |\psi\rangle \in \mathcal{H}_1=\mathbb{C}^{2^{m+1}}$. Afterwards, the operator $U_{A}$ is applied. The ancilla qubit is to be measured at the end of the circuit and one obtains the desired state $A|\psi\rangle$, which corresponds to applying the projection operator $P_\mathcal{H}$ defined as
\begin{equation}
    P_{\mathcal{H}}U_{A} \cdot |\psi\rangle := (\langle 0| \otimes I)U_A(|0\rangle \otimes I) |\psi\rangle.
\end{equation}
However, implementing the off-diagonal blocks defect operators $\sqrt{I-A^\dagger A}$, $\sqrt{I-A A^\dagger}$ is non-trivial. 

If $A$ has some special structure, for example, it is a sparse Hermitian matrix, denoted by $H$, one can develop some easier-to-implement block-encoding strategies with more ancilla qubits. For example, given the sparse access oracle of a sparse Hermitian (also called a Hamiltonian) matrix $H$, one can construct a block-encoding $U_H$ using the following lemma.
\begin{mylemma}[{\cite[Lemma 6]{low2019hamiltonian}}]\label{lemma:block:hamiltonian}
    Let the oracle $O_{H}$ specify an $s$-sparse Hamiltonian $H \in\mathbb{C}^{2^m\times 2^m}$ with max-norm $\|H\|_{\max}$ and $O_{F}$ specify the column index of its non-zero elements,
    \begin{equation}
        O_{H} |j\rangle |k\rangle |z\rangle = |j\rangle |k\rangle |z\oplus H_{jk}\rangle, \quad O_{F} |j\rangle |l\rangle = |j\rangle |f(j,l)\rangle.
    \end{equation}
    Then the oracles encoding $(\langle 0^{m+2} | \otimes I)U_H(|0^{m+2} \rangle \otimes I ) = \frac{H}{s\|H\|_{\max}}$ can be implemented using $\mathcal{O}(1)$ queries.
\end{mylemma}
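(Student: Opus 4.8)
The plan is to build $U_H$ by hand as a short circuit that invokes $O_H$ and $O_F$ only a constant number of times, and then to verify the exact block-encoding condition of Definition~\ref{def:block} (with $\delta=0$) by computing the amplitude $(\langle 0^{m+2}|\otimes\langle j|)\,U_H\,(|0^{m+2}\rangle\otimes|k\rangle)$ entrywise and matching it against $H_{jk}/(s\|H\|_{\max})$. Split the $(m+2)$ ancilla qubits into an $m$-qubit \emph{index register} $b$ (to hold a row/column label) and two auxiliary qubits, one carrying a loaded amplitude proportional to $H_{jk}$ and one used as a flag for a consistency check; write $|k\rangle_{\mathrm{sys}}$ for the state of the $m$-qubit system register.

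\textbf{State preparation.} Using one query to $O_F$ and a layer of Hadamards on $b$, construct a unitary $D$ sending $|0^m\rangle_b|k\rangle_{\mathrm{sys}}$ to $\tfrac{1}{\sqrt{s}}\sum_{l=1}^{s}|f(k,l)\rangle_b|k\rangle_{\mathrm{sys}}$, i.e.\ a uniform superposition over the (at most $s$) indices $j$ with $H_{jk}\ne 0$, padding with dummy labels when a column has fewer than $s$ nonzeros. Since $H=H^{\dagger}$ has a symmetric sparsity pattern, the same primitive enumerates the nonzeros of rows and of columns. \textbf{Amplitude loading.} Using one query to $O_H$, write $H_{jk}$ into a scratch register, apply a controlled single-qubit rotation that encodes the normalized entry into the amplitude of the first auxiliary qubit (for instance $|0\rangle\mapsto \tfrac{H_{jk}}{\|H\|_{\max}}|0\rangle+\sqrt{1-|H_{jk}|^{2}/\|H\|_{\max}^{2}}\,|1\rangle$, legitimate because $|H_{jk}|\le\|H\|_{\max}$, with the phase of $H_{jk}$ riding on the $|0\rangle$ branch), and then uncompute the $O_H$ query so the scratch register returns to its initial state. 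Call $R$ this loading step.

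\textbf{Swap and fold.} Set
\[
U_H \;:=\; D_{\mathrm{row}}^{\dagger}\;\cdot\;\mathrm{SWAP}_{b\leftrightarrow\mathrm{sys}}\;\cdot\;(RD)_{\mathrm{col}},
\]
where the subscripts indicate which register feeds the index preparation. Tracing an input $|0^{m+2}\rangle\otimes|k\rangle_{\mathrm{sys}}$ through the circuit, the amplitude $(\langle 0^{m+2}|\otimes\langle j|)U_H(|0^{m+2}\rangle\otimes|k\rangle)$ becomes an overlap of the form $\big\langle D_{\mathrm{row}}(|0^m\rangle_b|j\rangle_{\mathrm{sys}})\,\big|\,\mathrm{SWAP}\,\big|\,RD_{\mathrm{col}}(|0^m\rangle_b|k\rangle_{\mathrm{sys}})\big\rangle$: the $\mathrm{SWAP}$ forces the two index labels to match (which requires $H_{jk}\ne 0$), projecting the auxiliary qubit onto $|0\rangle$ selects the amplitude $H_{jk}/\|H\|_{\max}$, and the two $\tfrac{1}{\sqrt s}$ factors combine, so the amplitude equals exactly $\tfrac{1}{s}\cdot\tfrac{H_{jk}}{\|H\|_{\max}}$ for every $j,k$. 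Hence $(\langle 0^{m+2}|\otimes I)U_H(|0^{m+2}\rangle\otimes I)=H/(s\|H\|_{\max})$. Unitarity of $U_H$ is immediate since it is a composition of unitaries, and the query complexity is $\mathcal{O}(1)$ because $O_F$ and $O_H$ are each called a constant number of times (including the uncomputation).

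\textbf{Main obstacle.} The delicate part is the simultaneous bookkeeping of the complex phase of $H_{jk}$ and the ancilla registers in the folding step: one must orient the ``row'' and ``column'' preparations so that the surviving cross term is $H_{jk}$ itself — neither its complex conjugate nor its modulus — while every other configuration of the $m+2$ ancilla qubits is annihilated by the projection onto $|0^{m+2}\rangle$, and one must verify that the scratch register used for $O_H$ and the flag qubit are both returned cleanly so that exactly two extra qubits beyond the index register suffice. Pinning down these phase conventions and the precise constants $m+2$ and $s\|H\|_{\max}$ is where the argument needs care; the remaining steps are routine verifications.
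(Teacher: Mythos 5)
The paper does not prove this lemma; it is cited directly from \cite{low2019hamiltonian} and invoked as a black box in the proofs of Theorems~\ref{thm:block:expAt:nonHer} and~\ref{thm:schro:dv:expAt}, so your sketch is an independent proof. It follows the standard sparse block-encoding construction from that reference: a state-preparation $D$ via one $O_F$ query, a controlled rotation $R$ loading $H_{jk}$ via two $O_H$ queries (the second to uncompute the scratch register), and a swap-and-fold $U_H = D_{\mathrm{row}}^{\dagger}\cdot\mathrm{SWAP}\cdot (RD)_{\mathrm{col}}$; your overlap calculation does return $H_{jk}/(s\|H\|_{\max})$, including for $H_{jk}=0$ and for diagonal entries. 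The delicate point you honestly flag — orienting things so that the surviving cross term is $H_{jk}$ and not $H_{kj}=H_{jk}^{*}$ — is real and is resolvable: with $O_F|j\rangle|l\rangle=|j\rangle|f(j,l)\rangle$ enumerating \emph{column} indices of nonzeros in row $j$, the surviving branch after the SWAP has $f(k,l)=j$ in the system register and $k$ in $b$, so the $O_H$ call inside $R$ (executed before the SWAP, when $b=f(k,l)$ and $\mathrm{sys}=k$) must read $b$ as the \emph{first} argument of $O_H$; reading $\mathrm{sys}$ first would instead block-encode $H^{\dagger}$. A common alternative — and the one usually written out in the literature including \cite{low2019hamiltonian} — is the symmetric form $T^{\dagger}\cdot\mathrm{SWAP}\cdot T$ with a single state-prep $T$ loading $\sqrt{H_{jk}^{*}/\|H\|_{\max}}$ on the $|0\rangle$ branch; folding then gives $\sqrt{H_{jk}}\,\overline{\sqrt{H_{kj}}}=H_{jk}$ by Hermiticity, so the conjugation bookkeeping is absorbed into the algebra rather than into the orientation of $O_H$, at the cost of square-root arithmetic. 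Two smaller gaps in your write-up worth filling: a Hadamard layer on $b$ prepares a uniform superposition over all $2^m$ labels, not over the $s$ enumeration indices you want unless $s$ is a power of two, so $D$ needs a genuine $s$-dimensional state preparation (or a flag qubit projecting out $l\ge s$, which is presumably the ``consistency check'' qubit you allude to but do not specify); and the dummy labels padding columns with fewer than $s$ nonzeros must be chosen to collide with no legitimate $f(j,l')$, or spurious terms survive the Kronecker deltas.
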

In addition, one can construct block-encodings to perform matrix operations, including summation, multiplication, inverse, and polynomial transformation \cite{camps2022explicit, gilyen2019quantum}. Several properties that will be utilized in our work are introduced in Appendix \ref{sec:lemmas}.

We turn our attention to the operator of the form $V(t)=\exp(-At)$, which can be seen as the solution operator of the general evolution system
\begin{equation}\label{mod:origin}
    u_t = -Au, \quad u(t=0) = u_0, \quad t\in [0,T],
\end{equation}
and from now on, we require $\lambda_0(A) > 0$, namely, the real parts of all the eigenvalues of $A$ are positive. While considering a system of ODEs with $n$ variables, $\mathcal{H}$ is finite dimensional and $A\in\mathbb{C}^{n\times n}$. Given a Hermitian matrix $A$, generally one cannot use Lemma \ref{lemma:block:hamiltonian} to encode $V(T)$ at the final time $T$ without the sparse access oracle. However, one can construct a block-encoding $U(T)$ of the solution operator $V(T)$ by the quantum singular value transformation (QSVT) as shown in Appendix \ref{sec:lemmas}, Lemma \ref{lemma:block:expAt}. For non-Hermitian $A$, if $A=W\Sigma V^\dagger$ is a singular value decomposition, QSVT gives $P^{(SV)}(A):=WP(\Sigma)V^\dagger$ instead of the polynomial transformation $P(A)$. Therefore one needs to split $A$ into a linear combination of Hermitian operators $H_1,H_2$ with $A=H_1+iH_2$,
\begin{equation}
    H_1 = \frac{A+A^\dagger}{2}, \quad H_2 = \frac{A-A^\dagger}{2i}, \quad H_1^\dagger = H_1, \quad H_2^\dagger = H_2, \quad \lambda_{\min}(H_1) = \lambda_0(A) > 0.
\end{equation}
Making a Trotter decomposition, $V(T)=e^{-AT}$ can be simulated step by step through $e^{-H_1\Delta t}$ with Appendix \ref{sec:lemmas}, Lemma \ref{lemma:block:expAt} and $e^{-iH_2 \Delta t}$ with Appendix \ref{sec:lemmas}, Lemma \ref{lemma:block:expiHt} separately. The results are summarized in Theorem \ref{thm:block:expAt:nonHer}. For the details of proof, see Appendix \ref{sec:proof:block:expAt:nonHer}.

\begin{mythm}\label{thm:block:expAt:nonHer}
Consider solving Equation \eqref{mod:origin} where $A\in\mathbb{C}^{2^m\times 2^m}$ is a non-Hermitian matrix with the smallest real part of eigenvalues $\lambda_{0}(A)>0$. Suppose that $A=H_1+iH_2$ with $H_1$, $H_2$ Hermitian matrices and one has sparse access to oracles $O_{H_1}$, $O_{H_2}$, $O_{u}$. For any $T > 0$ and $\delta < 1/2$, there exists a quantum algorithm that outputs an $\delta$-approximation of $|u(T)\rangle$ with $\Omega(1)$ success probability, using
\begin{equation}\label{eqn:block:expAt:Q}
    \tilde{\mathcal{O}} \left( \left(\frac{\left\|u(0)\right\|}{\left\|u(T)\right\|}\right)^{2} \frac{\tau^2}{\delta} \left( 1 + \frac{s(A)\|A\|_{\max}}{\lambda_{0}(A)} \right) \right)
\end{equation}
queries to $O_{H_1}$, $O_{H_2}$, their inverse and controlled versions, $\mathcal{O}\left( \frac{\left\|u(0)\right\|}{\left\|u(T)\right\|}\right)$ queries to the controlled versions of the initial state oracle $O_u$, $\mathcal{O}(m)$ ancilla qubits, and
\begin{equation}
    \tilde{\mathcal{O}} \left( \left(\frac{\left\|u(0)\right\|}{\left\|u(T)\right\|}\right)^{2} \frac{\tau^2}{\delta} \left( m + \frac{s(A)\|A\|_{\max}}{\lambda_{0}(A)} \right) \right)
\end{equation}
one- and two- qubit gates with $\tau=s(A)\|A\|_{\max}T$.
\end{mythm}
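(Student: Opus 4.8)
The plan is to express $V(T)=e^{-AT}$ as a long product of block-encodable factors via a first-order Lie--Trotter splitting, assemble a block-encoding of that product from Lemmas \ref{lemma:block:hamiltonian}, \ref{lemma:block:expAt} and \ref{lemma:block:expiHt} together with the standard composition rules for block-encodings, apply it to the prepared initial state, and recover the normalised state $|u(T)\rangle$ by amplitude amplification. With $A=H_1+iH_2$, $H_1,H_2$ Hermitian and $\lambda_{\min}(H_1)=\lambda_0(A)>0$, the first task is to control the error in
\begin{equation*}
  V(T)=e^{-(H_1+iH_2)T}\;\approx\;M:=\bigl(e^{-H_1\Delta t}\,e^{-iH_2\Delta t}\bigr)^{r},\qquad r=T/\Delta t .
\end{equation*}
The subtlety here is that $-H_1$ is dissipative rather than skew-Hermitian: since $\|e^{-H_1 s}\|_2\le 1$ for all $s\ge 0$ (and likewise $\|e^{-H_1\Delta t}e^{-iH_2\Delta t}\|_2\le 1$), the exponentially growing prefactor $e^{c(\|H_1\|_2+\|H_2\|_2)T}$ of the usual Lie--Trotter commutator bound can be replaced by an absolute constant when telescoping, leaving $\|V(T)-M\|_2=\mathcal{O}(T\Delta t\,\|H_1\|_2\|H_2\|_2)$. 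Using $\|H_j\|_2\le s(A)\|A\|_{\max}$ and requiring this to be at most a target $\varepsilon$ gives $r=\mathcal{O}(\tau^2/\varepsilon)$ with $\tau=s(A)\|A\|_{\max}T$; the value of $\varepsilon$ is fixed in the final step.

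For the per-step block-encoding, feeding $O_{H_1}$ and $O_{H_2}$ into Lemma \ref{lemma:block:hamiltonian} produces $\mathcal{O}(1)$-query block-encodings of $H_1/(s\|H_1\|_{\max})$ and $H_2/(s\|H_2\|_{\max})$ on $\mathcal{O}(m)$ ancilla qubits. Lemma \ref{lemma:block:expiHt} upgrades the first to a $(1,\mathcal{O}(m),\epsilon_2)$-block-encoding of $e^{-iH_2\Delta t}$, and Lemma \ref{lemma:block:expAt} upgrades the second to an $(\alpha,\mathcal{O}(m),\epsilon_1)$-block-encoding of $e^{-H_1\Delta t}$. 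It is essential that $\alpha$ can be taken to be exactly $1$, which is possible precisely because $H_1$ is positive definite, so that $e^{-H_1\Delta t}$ is itself a contraction; it costs a condition-number-type $\tilde{\mathcal{O}}\bigl(s(A)\|A\|_{\max}/\lambda_0(A)\bigr)$ queries (the suppressed logarithm being in $1/\epsilon_1$), and this is where the factor $1+s(A)\|A\|_{\max}/\lambda_0(A)$ enters. The product rule for block-encodings then yields a $(1,\mathcal{O}(m),\epsilon_1+\epsilon_2)$-block-encoding of a single Trotter step, and composing $r$ copies of it (reusing the $\mathcal{O}(m)$-qubit ancilla register) yields a $(1,\mathcal{O}(m),r(\epsilon_1+\epsilon_2))$-block-encoding $U(T)$ of $M$. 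Pinning the subnormalisation at exactly $1$ throughout is crucial: any $\alpha>1$ would compound to $\alpha^{r}$ and destroy the bound.

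It remains to propagate the errors and amplify. Applying $U(T)$ to $|0^l\rangle\otimes O_u|0\rangle=|0^l\rangle|u(0)\rangle$, where $|u(0)\rangle$ is the normalised initial vector, produces a state whose $|0^l\rangle\otimes\mathcal{H}$ component equals $\bigl(\|u(T)\|/\|u(0)\|\bigr)|u(T)\rangle$ up to norm error $\eta:=\|V(T)-M\|_2+r(\epsilon_1+\epsilon_2)$, so the ``good'' amplitude is $\Theta(\|u(T)\|/\|u(0)\|)$. Amplitude amplification with $k=\mathcal{O}(\|u(0)\|/\|u(T)\|)$ rounds raises the success probability to $\Omega(1)$; as each round invokes $U(T)$ and $O_u$ once, this both multiplies the block-encoding cost by $k$ and produces the $\mathcal{O}(\|u(0)\|/\|u(T)\|)$ queries to the controlled $O_u$. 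Since the amplified output inherits an error $\mathcal{O}(k\eta)$ from the $\eta$-inaccuracy of $U(T)$, one must take $\eta=\Theta(\delta/k)=\Theta(\delta\,\|u(T)\|/\|u(0)\|)$, hence $\varepsilon=\Theta(\delta\,\|u(T)\|/\|u(0)\|)$ and $\epsilon_1,\epsilon_2=\Theta(\varepsilon/r)$. Substituting back, $r=\mathcal{O}\bigl((\tau^2/\delta)(\|u(0)\|/\|u(T)\|)\bigr)$, each Trotter step costs $\tilde{\mathcal{O}}\bigl(1+s(A)\|A\|_{\max}/\lambda_0(A)\bigr)$ queries to $O_{H_1},O_{H_2}$ and their controlled and inverse versions, and multiplying by the $k$ amplification rounds reproduces \eqref{eqn:block:expAt:Q}; expanding each oracle query into $\mathcal{O}(m)$ elementary gates and adding the QSVT rotation gates gives the stated gate count.

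The main obstacle is exactly this joint error budget. Because the object to be prepared is the \emph{normalised} state $|u(T)\rangle$ while its preparation amplitude $\|u(T)\|/\|u(0)\|$ can be exponentially small, every tolerance --- the Lie--Trotter step size and the two QSVT approximation errors --- must be tightened by that same factor, while amplitude amplification independently multiplies the cost by its reciprocal; together these effects are what produce the $(\|u(0)\|/\|u(T)\|)^2$ prefactor. The two places where genuine care is required are (i) obtaining a block-encoding of the contraction $e^{-H_1\Delta t}$ with subnormalisation exactly $1$, so that the $r$-fold composition does not blow up as $\alpha^{r}$, and (ii) checking that the Lie--Trotter error estimate for the dissipative generator $-H_1$ does not carry the usual exponentially growing prefactor in $T$.
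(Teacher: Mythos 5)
Your overall strategy — first-order Lie--Trotter splitting $e^{-AT}\approx(e^{-H_1\Delta t}e^{-iH_2\Delta t})^K$, block-encode $e^{-H_1\Delta t}$ via Lemma~\ref{lemma:block:expAt}, block-encode $e^{-iH_2\Delta t}$ via Lemma~\ref{lemma:block:expiHt}, compose with Lemma~\ref{lemma:block:multiply}, and recover $|u(T)\rangle$ by amplitude amplification with a rescaled target precision $\delta'=\Theta(\delta\|u(T)\|/\|u(0)\|)$ — is exactly the route the paper takes, and your parameter accounting ($K=\mathcal{O}(\tau^2/\delta')$, per-step cost $\tilde{\mathcal{O}}(1+s(A)\|A\|_{\max}/\lambda_0(A))$, $\mathcal{O}(\|u(0)\|/\|u(T)\|)$ amplification rounds) reproduces the stated bounds. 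Your observation that the Lie--Trotter telescoping for the dissipative generator $-H_1$ does not carry the usual $e^{\Theta(\|H\|T)}$ prefactor because $\|e^{-H_1 s}\|_2\le 1$ is also correct and is what makes the $K=\mathcal{O}(\tau^2/\delta')$ count work.

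The one place your argument has a genuine hole is the claim, which you flag as essential, that the block-encoding of $e^{-H_1\Delta t}$ can be taken with subnormalisation $\alpha$ \emph{exactly} $1$ ``because $H_1$ is positive definite, so that $e^{-H_1\Delta t}$ is itself a contraction.'' Being a contraction does not give you an oracle-efficient $(1,a,\delta)$-block-encoding: the exact $\alpha=1$ dilation is the Sz.-Nagy matrix \eqref{eqn:nagy:unitarymatrix} whose off-diagonal blocks require $\sqrt{I-e^{-2H_1\Delta t}}$, and there is no efficient way to synthesize that from $O_{H_1}$. The tool you actually cite, Lemma~\ref{lemma:block:expAt}, produces a $(3,l+4,\delta_1)$-block-encoding — $\alpha=3$, not $1$ — and that factor is intrinsic to the polynomial approximation of $e^{-x}$ used in QSVT. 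So as written your $K$-fold composition would accrue subnormalisation $3^K$, precisely the blow-up you warn against. (The paper's own proof composes these $(3,\cdot,\cdot)$-encodings and then applies $\mathcal{O}(\|u(0)\|/\|u(T)\|)$ rounds of amplitude amplification without addressing the compounding either, so you are not alone; but your proposed fix does not go through.) The standard way to repair this is \emph{uniform singular value amplification}: since the singular values of $e^{-H_1\Delta t}$ lie in $(0,1]$, one can post-process the $(3,\cdot,\cdot)$-block-encoding with a QSVT polynomial of degree $\tilde{\mathcal{O}}(\log K)$ that amplifies the subnormalisation down to $1+\mathcal{O}(1/K)$; then $(1+\mathcal{O}(1/K))^K=\mathcal{O}(1)$ and the $\mathcal{O}(\|u(0)\|/\|u(T)\|)$ amplification count survives, with only polylogarithmic overhead absorbed into $\tilde{\mathcal{O}}$. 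Either cite and invoke such an amplification step explicitly, or abandon the ``$\alpha=1$'' claim and replace it with a quantitative argument that controls the product of per-step subnormalisations.
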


\begin{myremark}\label{remark:block:expAt:nonHer}
 The parameter $\left\|u(0)\right\|/\left\|u(T)\right\|$ comes from oblivious amplitude amplification to approximate $|u(T)\rangle$ with $\Omega(1)$ success probability. The parameter $\tau^2/\delta$ comes from the number of Trotter decomposition segments 
 $K$. For each segment, we apply Lemma \ref{lemma:block:expAt} with $T^\prime=T/K$ and $\tilde{T}^\prime=\log(1/(\delta/K))$. The parameter $1/\lambda_{0}(A)$ comes from constructing a block-encoding of $(I-H_1)$, which is utilized to construct the block-encoding of $e^{-H_1T^\prime}$. If $H_1$ satisfies other better conditions, for example, the square root or the eigenvalues and eigenstates of $H_1$ are known, the complexity can be independent of $\lambda_{0}(A)$. For more details, we refer the readers to \cite{an2022theory}.
\end{myremark}

\section{Schr\"odingerisation}
\label{sec:schrodingerisation}
We first make a remark that the Sz.-Nagy dilation theorem guarantees that an $n$-dimensional non-unitary matrix can be dilated to a unitary matrix of twice the dimension, which only requires an augmentation of the system by one ancilla qubit. This method is also referred to as `qubitisation' \cite{low2019hamiltonian}. Now it is intriguing to ask if there is an analogue procedure for the case of infinite-dimensional systems. In this case, we see that instead of adding a single qubit, we add a single infinite-dimensional quantum system, a \textit{qumode}. This can be considered as a \textit{qumodisation} procedure. This is a continuous-variable version of the Schr\"odingersation procedure.

Considering operators of the form $V(t)=\exp(-At)$, the solution operator of the general evolution system \eqref{mod:origin}. It has been shown in Section \ref{sec:block} that if $\mathcal{H}$ is finite dimensional, a block-encoding of $V(T)$ can be constructed. Nevertheless, when $A$ is a linear differential operator acting on a function, which often appears in PDEs, the dimension of $\mathcal{H}$ is {\it infinite}. An alternative method called \textit{Schr\"odingerisation} was introduced in \cite{jin2022quantum,jin2022quantumdetail} to solve equation \eqref{mod:origin}. The Schr\"odingerisation method involves transforming the original equation into a Schr\"odinger-type equation in one higher dimension using a simple yet powerful technique called the \textit{warped phase transformation}, which transform $u(t)$ to $e^{-p}u(t)$ for $p>0$.  $V(t)$ is transformed to the unitary solution operator of the deduced Schr\"odinger-type equation. The Schr\"odingerisation formalism is not only naturally applicable to continuous-variable quantum modes, but also applicable to qubits by discretising those continuous variables. For a description of Schr\"odingerisation entirely in the qumode language, see \cite{jin2023analog}.

In this section, we focus on solving equation \eqref{mod:origin} with the Schr\"odingerisation approach. We give the details of this method and establish its connection to Nagy's dilation theorem for both continuous-variable quantum modes (qumodes) and qubits.

\subsection{Qumodisation}
\label{sec:schro:cv}

While considering continuous-variable quantum modes, we denote $A_{CV}$ to be an infinite operator with $\lambda_0(A)>0$ and make the decomposition
\begin{equation}
    A_{CV} = H_1+iH_2, \quad H_1 = \frac{A_{CV}+A_{CV}^\dagger}{2}, \quad H_2 = \frac{A_{CV} - A_{CV}^\dagger}{2i}.
\end{equation}
Introduce a real one-dimensional variable $p>0$ and define
\begin{equation}
    w(t,p) = e^{-p} u(t),
\end{equation}
which is called the \textit{warped phase transformation}. The original solution $u(t)$ of equation \eqref{mod:origin} can be recovered using $u(t)=\int_{0}^{\infty} w(t,p) dp$ or $u(t)=e^{p}w(t,p)$ for any $p>0$. We extend the domain of $p$ to $(-\infty,+\infty)$ with $w = w(t,p)$ satisfying 
\begin{equation}
    \partial_t w = -(H_1 + iH_2) w = H_1 \partial_p w - iH_2 w,
\end{equation}
with evenly extended initial condition $w(0,p) = e^{-|p|}u_0$. Let $\tilde{w} = \tilde{w}(t,\eta)$ be the Fourier transform of $w$ in $p$ and $\eta \in \mathbb{R}$ be the Fourier mode. Then $\tilde{w}$ satisfies the following Schr\"odinger-type equation
\begin{equation}\label{mod:schro:cv}
    i\partial_t \tilde{w} = (\eta H_1 + H_2) \tilde{w}, \quad \tilde{w}(0,\eta) = \mathcal{F}w(0,\eta) = \frac{2}{\eta^2 + 1} u_0.
\end{equation}
The solution operator of the original model \eqref{mod:origin} is $V(t):=\exp(-A_{CV}t)$ for $t \geq 0$, which can be extended to
\begin{equation}
    V(t) = \exp(-H_1|t|-iH_2t), \quad t\in\mathbb{R},
\end{equation}
with $V(t):=V(-t)^\dagger$ for $t<0$. The solution operator of the Schr\"odinger-type equation \eqref{mod:schro:cv} is
\begin{equation}
    U_{CV}(t) = \exp(-i(\eta H_1 + H_2)t), \quad t \in \mathbb{R}.
\end{equation}
In the following text, we show that $U_{CV}(t)$ is actually a dilation of $V(t)$ as defined in \eqref{eqn:bg:dilation:semigroup}. 

\begin{mythm}[Dilation theorem via Schr\"odingerisation]
Let $V(t) = \exp(-At)$ with $\lambda_0(A)>0$, acting on $\mathcal{H}$. $\{V(t)\}_{t\geq 0}$ is a continuous semigroup of contractions. The unitary operators $\{U_{CV}(t)\}_{t\geq0}$ acting on a larger space $\mathcal{H}_1 \supset \mathcal{H}$
\begin{equation}
    U_{CV}(t) = \exp(-i(\eta H_1 + H_2)t)
\end{equation}
form a unitary dilation of $\{V(t)\}_{t\geq 0}$, meaning that
\begin{equation}
    V(t) = P_{\mathcal{H}} U(t), \quad \text{for all } t \geq 0.
\end{equation}
\end{mythm}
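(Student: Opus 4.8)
The plan is to establish the dilation identity $V(t) = P_{\mathcal{H}} U_{CV}(t)$ by tracking a single initial state $u_0 \in \mathcal{H}$ through the warped phase transformation, the Fourier transform, the Schr\"odinger evolution $U_{CV}(t)$, and then back through the projection $P_{\mathcal{H}}$, and checking that the result agrees with $V(t)u_0 = \exp(-H_1|t| - iH_2 t)u_0$ for $t \geq 0$. The identification of the embedding $\mathcal{H} \hookrightarrow \mathcal{H}_1 = \mathcal{H} \otimes L^2_\eta(\mathbb{R})$ will be made explicit by the warped phase transformation: $u_0 \mapsto w(0,p) = e^{-|p|}u_0$, whose Fourier transform is $\tilde{w}(0,\eta) = \frac{2}{\eta^2+1}u_0$, and the projection $P_{\mathcal{H}}$ should be read off as the adjoint of this embedding (up to the appropriate normalization), i.e. integration against the profile $\frac{2}{\eta^2+1}$ in the $\eta$ variable, suitably normalized so that $P_{\mathcal{H}}$ composed with the embedding is the identity on $\mathcal{H}$.

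First I would diagonalize in the $\eta$-variable. Since $U_{CV}(t) = \exp(-i(\eta H_1 + H_2)t)$ acts on $\mathcal{H} \otimes L^2_\eta$ and multiplication by $\eta$ is just a scalar on each fiber, for a fixed $\eta$ the operator acts on $\mathcal{H}$ as $\exp(-i(\eta H_1 + H_2)t)$. Applying this to the initial profile $\frac{2}{\eta^2+1}u_0$ gives the state $\tilde{w}(t,\eta) = \frac{2}{\eta^2+1}\exp(-i(\eta H_1 + H_2)t)u_0$. Second, I would apply the projection: $P_{\mathcal{H}}U_{CV}(t)u_0$ should equal (a normalization constant times) $\int_{\mathbb{R}} \frac{2}{\eta^2+1}\,\tilde{w}(t,\eta)\,d\eta = c\int_{\mathbb{R}} \frac{4}{(\eta^2+1)^2}\exp(-i(\eta H_1 + H_2)t)u_0\,d\eta$. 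Third — and this is the computational heart — I would evaluate this integral. Using the simultaneous spectral decomposition is awkward because $H_1$ and $H_2$ need not commute; instead I expect the cleaner route is to go back to the $p$-variable before evolving: recall $u(t) = \int_0^\infty w(t,p)\,dp$ where $w$ solves $\partial_t w = H_1\partial_p w - iH_2 w$ with $w(0,p) = e^{-|p|}u_0$. For $t \geq 0$ the transport term $H_1\partial_p$ shifts mass toward $p = +\infty$ at ``speed'' governed by $H_1 > 0$; the key algebraic fact to verify is that $\exp(-H_1|t| - iH_2 t)$ is exactly what one recovers from the characteristics of this transport-reaction equation restricted to $p > 0$, because for $p$ large enough the even extension looks like $e^{-p}$ and $H_1\partial_p(e^{-p}v) = -H_1 e^{-p}v$ reproduces the $-H_1$ generator, while the kink at $p=0$ contributes nothing to $\int_0^\infty$ for $t \geq 0$ since information flows from $p<0$ into $p>0$ but the formula for $u(t) = e^p w(t,p)$ can be checked pointwise for any fixed $p > H_1$-related bound. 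I would carry this out by direct substitution: set $u(t) = e^p w(t,p)$, differentiate, and confirm it satisfies $u_t = -A_{CV}u$.

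The main obstacle I anticipate is the non-commutativity of $H_1$ and $H_2$, which blocks a naive spectral-calculus evaluation of the Fourier integral and forces the argument to run through the PDE in the $(t,p)$ variables rather than through the $(t,\eta)$ Fourier side. A secondary subtlety is being careful about the even extension $w(0,p) = e^{-|p|}u_0$: one must check that the recovery formula $u(t) = e^p w(t,p)$ is valid (independent of the choice of $p$ in the appropriate range) and that the finite propagation speed — or more precisely, the sign of the transport — guarantees the singularity at $p = 0$ does not pollute the region $p > 0$ for $t \geq 0$. I would also need to verify that $P_{\mathcal{H}}$ as defined via $\langle P_{\mathcal{H}}U h, h'\rangle = \langle Uh, h'\rangle$ for $h, h' \in \mathcal{H}$ is consistent with the integration-against-$\frac{2}{\eta^2+1}$ description, which amounts to checking the embedding is an isometry onto its image after the correct normalization — a routine but necessary bookkeeping step. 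Finally, one should remark that $\{U_{CV}(t)\}$ is a genuine one-parameter unitary group since $\eta H_1 + H_2$ is self-adjoint for each real $\eta$, so the semigroup property $V(s+t) = V(s)V(t)$ for $s,t \geq 0$ is inherited, closing the claim that this is a dilation in the sense of \eqref{eqn:bg:dilation:semigroup}.
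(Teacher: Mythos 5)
There is a genuine structural error in the setup, and the escape route you propose does not repair it. You embed $\mathcal{H}$ into $\mathcal{H}_1=\mathcal{H}\otimes L^2_\eta(\mathbb{R})$ (with the \emph{unweighted} $L^2$ inner product) by sending $u_0$ to the profile $\tfrac{2}{\eta^2+1}u_0$, and then take $P_\mathcal{H}$ to be the adjoint of this map, i.e.\ (normalized) integration against $\tfrac{2}{\eta^2+1}$. Composing, $P_\mathcal{H}U_{CV}(t)$ becomes an integral against $\tfrac{4}{(\eta^2+1)^2}$, and this is simply the wrong kernel: in the scalar, commuting test case $H_1=1$, $H_2=0$ one gets
\begin{equation}
\frac{1}{2\pi}\int_{\mathbb{R}}\frac{4}{(\eta^2+1)^2}\,e^{-i\eta t}\,d\eta=(1+t)e^{-t}\neq e^{-t},\qquad t>0,
\end{equation}
by a double-pole residue at $\eta=-i$. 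The paper avoids this by putting the weight $f(\eta)=\tfrac{2}{\eta^2+1}$ into the \emph{measure} of $\mathcal{H}_1$ (a weighted $L^2$ space), so that the embedding is the \emph{constant} function $h(\eta)\equiv h_0$ and $P_\mathcal{H}U h_0=\tfrac{1}{2\pi}\int f(\eta)\,U h(\eta)\,d\eta$ picks up the Cauchy kernel once, not squared; the same residue calculation then does give $e^{-t}$. The weight must appear exactly once, and the place to put it is the inner product, not the embedded vector.

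Your worry about non-commutativity of $H_1,H_2$ is also a red herring, and the detour it prompts does not prove the theorem. You do not need a joint spectral calculus to evaluate $\tfrac{1}{2\pi}\int \tfrac{2}{\eta^2+1}e^{-i(\eta H_1+H_2)t}h_0\,d\eta$: for fixed $t$, the map $\eta\mapsto\langle e^{-i(\eta H_1+H_2)t}h_0,h_0'\rangle$ is entire in $\eta$, so Cauchy's residue theorem applies directly (with the arc contribution controlled using $\lambda_0(H_1)\geq 0$); the single residue at $\eta=-i$ produces $\langle e^{-(H_1+iH_2)t}h_0,h_0'\rangle$ with no commutativity assumption. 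By contrast, the $p$-side recovery you propose — check $u(t)=e^p w(t,p)$ for $p$ larger than some $H_1$- and $t$-dependent bound — is not a dilation in the sense required: evaluation at a single $p$ is not a bounded Hilbert-space projection on $L^2_p$, and a $t$-dependent choice of $p$ is not a fixed $P_\mathcal{H}$, so it cannot establish $V(t)=P_\mathcal{H}U(t)$ uniformly in $t$. It is a correct way to justify the classical recovery of $u$ in the Schr\"odingerisation algorithm, but it is a different statement from the theorem under discussion.
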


\begin{proof}
Let us define the space $\mathcal{H}_1$ of mappings $h(\eta)$ from $\mathbb{R}$ to $\mathcal{H}$ with the following bilinear form
\begin{equation}\label{eqn:schro:cv:bilinearH}
    \langle h, h^\prime \rangle := \frac{1}{2\pi} \int_{\mathbb{R}} f(\eta) \langle h(\eta), h^\prime(\eta) \rangle \dif \eta, \quad [h=h(\eta),h^\prime=h^\prime(\eta)], \quad f(\eta) = \frac{2}{\eta^2+1}.
\end{equation}
$h(\eta)\in\mathcal{H}_1$ if $\langle h, h\rangle$ is well defined. $\mathcal{H}$ is embedded in $\mathcal{H}_1$ trivially by $h_0 \mapsto h = h(\eta) \equiv h_0 $ and an operator $U$ acting on $\mathcal{H}_1$ is projected to the operator $P_{\mathcal{H}}U$ acting on $\mathcal{H}$ defined by
\begin{equation}
    P_{\mathcal{H}}U \cdot h_0 := \frac{1}{2\pi} \int_{\mathbb{R}} f(\eta) U(h(\eta)) d\eta.
\end{equation}
The definitions are well defined since for $h_0, h_0^\prime \in \mathcal{H}$,
\begin{equation}
    \langle h, h^\prime \rangle = \frac{1}{2\pi} \int_{\mathbb{R}} f(\eta) \langle h_0, h_0^\prime \rangle \dif \eta = \langle h_0, h_0^\prime \rangle .
\end{equation}
Clearly $U_{CV}(t)$ is a unitary operator. Given $h_0, h_0^\prime \in \mathcal{H}$, we point out that
\begin{equation}\label{eqn:schro:cv:projU}
    P_{\mathcal{H}} U_{CV}(t) \cdot h_0 = \frac{1}{2\pi} \int_{\mathbb{R}} \frac{2}{\eta^2+1}  e^{-i(\eta H_1+H_2)t} h_0 d\eta = e^{- H_1|t| - iH_2t} h_0,
\end{equation}
which is proved in Appendix \ref{sec:proof:residue}. Therefore $U_{CV}(t)$ is exactly a dilation of $V(t)$ since $P_{\mathcal{H}}U_{CV}(t)=V(t)$. From the perspective of solving equations, we apply the inverse Fourier transformation to recover $w(t,p)=\mathcal{F}^{-1}\tilde{w}(t,\eta)$ after obtaining $\tilde{w}(t,\eta) = U_{CV}(t)\tilde{w}(0,\eta)$.
\end{proof}


\begin{myremark}
Schr\"odingerisation is an explicit construction of Nagy's dilation theorem. In comparison to certain existing methods, such as the approach presented in \cite{shamovich2017dilations}, where the contraction is incorporated into an input/state/output (i/s/o) linear system, Schr\"odingerisation elevates the system by only one dimension and offers a straightforward implementation. Furthermore, Schr\"odingerisation can be seamlessly extended to encompass multi-parameter semigroup of commuting operators.
\end{myremark}
 
\begin{myexample}[Heat equation]
We take the heat equation as an example.
\begin{equation}\label{mod:schro:cv:heat}
    \partial_t u(t,x) = - A_{CV} u(t,x) = - (-\nabla_x^2 + V(x)) u(t,x), \quad x\in\mathbb{R}^d.
\end{equation}
Using Schr\"odingerisation, the heat equation is transformed into a Schr\"odinger equation by the warped phase transformation $w(t,x,p) = e^{-p} u(t,x)$ and the Fourier transform $\tilde{w}=\mathcal{F}w$:
\begin{equation}
    \partial_t w = (-\nabla_x^2 + V(x)) \partial_p w, \quad i\partial_t \tilde{w} = \eta(-\nabla_x^2 + V(x)) \tilde{w},
\end{equation}
where $\eta\in\mathbb{R}$ is the Fourier mode. {\it The Schr\"odingerisation formalism is naturally applicable to qumodes}. These are the analogue or continuous counterparts to qubits. Here the solution $u(t,x)$ are embedded in continuous-variable quantum states $ |u(t)\rangle \propto \int u(t,x)|x\rangle dx$ where $\{|x\rangle\}$ is an orthonormal basis set in infinite dimensional Hilbert space since $x$ spans $\mathbb{R}^d$ where $d$ is the number of spatial dimensions in $x$. If $x$ represents the position of the wavefunction $u(x,t)$, the corresponding operator is $\hat{x}$ where $\hat{x}|x\rangle=x|x\rangle$. The conjugate operator is the momentum operator $\hat{p}$ where $[\hat{x}, \hat{p}]=i$ and we can represent $\hat{p}=-i \partial/ \partial x$. This means the operator $\exp(-At)$ can be encoded in the unitary
\begin{align}
    U_{CV}(t)=\exp(i(\hat{p}^2+V(\hat{x})) \otimes \hat{\eta}t).
\end{align}
Here $\hat{\eta}$ can be chosen to be any one-mode operator, e.g. $\hat{\eta}=\hat{x}$. In this case, {\it no discretisation of the system in any variable is necessary}. In principle, this is a much more accurate simulation of the PDE since it deals directly with the continuous nature of the PDEs, and does not depend on the details of the discretisation schemes.
\end{myexample}

In the above example, we showed dilation can be used in the context of continuous-variable quantum modes, or `qumodes'. This is most appropriate when we want to embed $\exp(-At)$ into a unitary operator, when $A$ itself can be infinite-dimensional. This is true when solving PDEs for example. Upon discretisation of the differential operators $A$, one has a  representation by finite-dimensional matrices. In the absence of discretisation schemes, we keep the continuous nature of the differential operators, so $A=H_1+iH_2$ acts on qumodes instead of qubits, where $H_1, H_2$ are Hermitian operators also acting on qumodes. Then $\exp(-At)$ can be embedded into the unitary operator
\begin{align}
    U_{CV}(t)=\exp(-i(H_1 \otimes \hat{\eta}+H_2 \otimes I)t).
\end{align}
Here we can ask a similar question as for the qubit case, namely what is the cost in simulating $U_{CV}(t)$? Here in the continuous-variable case, we don't have a direct analogue of sparse-access or block-access to an infinite-dimensional operator $A$ (which has infinite norm), where we can directly apply the digital quantum simulation algorithms. In this case, it is more appropriate to consider {\it analogue quantum simulation}, where instead of considering primitive gates, we consider gates that naturally realise the appropriate unitary operation. For instance, if one has access to $U_1(t)=\exp(-iH_1 \otimes \hat{\eta}t)$ and $U_2(t)=\exp(-i H_2 \otimes It)$ and $[H_1, H_2]=0$, then one only needs to implement $U_1(t)U_2(t)$. This would be the case for instance when $H_1$ and $H_2$ are both operators only in $\hat{x}$ (true for ODE problems) or only in $\hat{p}$ (true for homogeneous PDE problems with only constant-valued coefficients) . More complicated scenarios can be considered on a case by case basis.\\


\subsection{Schr\"odingerisation for qubit systems}
\label{sec:schro:dv}

The Schr\"odingerisation formalism is not only applicable to qumodes, but also to qubits. We first discretise the system in $x$, then the origin PDE is turned  to a system of ODEs with $n$ variables and the matrix $A_{DV}\in\mathbb{C}^{n\times n}$
\begin{equation}\label{mod:origin:dv}
    \frac{d\mathbf{u}}{dt} = -A_{DV}\mathbf{u}, \quad A_{DV} = \begin{bmatrix}
    a_{11} & \cdots & a_{1n} \\ \vdots & \ddots & \vdots \\ a_{n1} & \cdots & a_{nn}
    \end{bmatrix}, \quad \mathbf{u} = \begin{bmatrix}
    u_1 \\ \vdots \\ u_n
    \end{bmatrix}, \quad \mathbf{u}(0)=\mathbf{u}_0 \in \mathbb{C}^{n}.
\end{equation}
Similarly as in Section \ref{sec:schro:cv}, one applies the warped phase transformation $\mathbf{w}(t,p) = e^{-p}\mathbf{u}(t)$ and Fourier transform $\tilde{\mathbf{w}}(t,\eta) = \mathcal{F}\mathbf{w}(t,p)$, then $\mathbf{w}$, $\tilde{\mathbf{w}}$ satisfy the following equations
\begin{equation}
    \partial_t \mathbf{w} = -A_{DV}  \mathbf{w} = H_1 \partial_p \mathbf{w} - i H_2 \mathbf{w}, \quad i \partial_t \tilde{\mathbf{w}} = (\eta H_1 + H_2) \tilde{\mathbf{w}},
\end{equation}
where $H_1 = (A_{DV} + A_{DV}^\dagger)/2$, $H_2 = (A_{DV} - A_{DV}^\dagger)/(2i)$. To solve these equations with qubits, one proceeds by discretising $\eta$ with a mesh size $\Delta \eta = 2L/N$ in the domain $[-L,L]$, with $N$ a positive integer and $L>0$. Usually one assumes the computational domain to be of $\mathcal{O}(1)$, so we define $\eta^\prime = \eta / L \in [-1,1]$, $\Delta \eta^\prime = 2/N$ and $D=L\diag(\eta_1^\prime,\dots,\eta_N^\prime)$ with entries $\eta_j^\prime= \eta_j/L = -1+j\Delta \eta^\prime$, to obtain
\begin{equation}\label{mod:schro:dv}
    i\frac{d}{dt} \tilde{\mathbf{w}}_{DV} = (H_1\otimes D + H_2\otimes I) \tilde{\mathbf{w}}_{DV} = H_{total, DV} \tilde{\mathbf{w}}_{DV},
\end{equation}
where $\tilde{\mathbf{w}}_{DV}(t):=[\tilde{\mathbf{w}}(t,\eta_1),\dots,\tilde{\mathbf{w}}(t,\eta_N)]$. 
This is most appropriate if one wants to embed $\tilde{\mathbf{w}}$ into qubits. The solution operator of the origin model \eqref{mod:origin:dv} is $V(t) = \exp(-A_{DV}t)$, while the solution operator of the Schr\"odinger-type equation \eqref{mod:schro:dv} is
\begin{equation}
    U_{DV}(t) = \exp(-i(H_1\otimes D + H_2\otimes I) t).
\end{equation}
In the following text, we show that $U_{DV} (t)$ is an approximated dilation of $V(t)$.

\begin{mythm}
Let $V(t) = \exp(-At)$ with $\lambda_0(A) > 0$, acting on $\mathcal{H}$. $\{V(t)\}_{t\geq 0}$ is a continuous semigroup of contractions. The unitary operators $\{U_{DV}(t)\}_{t\geq0}$ acting on a larger space $\mathcal{H}_1 \supset \mathcal{H}$
\begin{equation}
    U_{DV}(t) = \exp(-i(H_1\otimes D + H_2\otimes I) t)
\end{equation}
is close to a unitary dilation of $\{V(t)\}_{t\geq 0}$ within error $\mathcal{O}(\delta)$, meaning that
\begin{equation}
    \left\| P_{\mathcal{H}} U(t) - V(t) \right\| = \mathcal{O}(\delta).
\end{equation}
\end{mythm}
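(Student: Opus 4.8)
The plan is to repeat the continuous-variable argument almost verbatim, with the integral projection replaced by the finite Riemann sum that the $\eta$-discretisation implicitly performs, and then to estimate the resulting quadrature error; note that here $A=A_{DV}$ is already the finite matrix produced by the prior spatial discretisation, so the $\eta$-grid is the only new source of error. First I would set $\mathcal{H}_1 = \mathcal{H}\otimes\mathbb{C}^N$, identify $h\in\mathcal{H}_1$ with the tuple $(h(\eta_1),\dots,h(\eta_N))$, $h(\eta_j)\in\mathcal{H}$, and equip it with the discrete weighted bilinear form
\begin{equation}\label{eqn:prop:dv:form}
  \langle h, h' \rangle := \frac{c_N\,\Delta\eta}{2\pi}\sum_{j=1}^{N} f(\eta_j)\,\langle h(\eta_j), h'(\eta_j)\rangle, \qquad f(\eta) = \frac{2}{\eta^2+1},
\end{equation}
where $c_N = \bigl(\tfrac{\Delta\eta}{2\pi}\sum_{j} f(\eta_j)\bigr)^{-1}$ is the normalisation making the trivial embedding $h_0 \mapsto (h_0,\dots,h_0)$ isometric, and $P_{\mathcal{H}}U\cdot h_0 := \tfrac{c_N\Delta\eta}{2\pi}\sum_{j} f(\eta_j)\,U(h(\eta_j))$. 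Since $\tilde{\mathbf w}(0,\eta)=f(\eta)\mathbf u_0$ and the $j$-th block of $H_1\otimes D+H_2\otimes I$ acts as $\eta_j H_1+H_2$, so that $U_{DV}(t)(h_0,\dots,h_0) = \bigl(e^{-i(\eta_j H_1 + H_2)t}h_0\bigr)_{j}$, this yields
\begin{equation}\label{eqn:prop:dv:projU}
  P_{\mathcal{H}}U_{DV}(t)\,h_0 = \frac{c_N\,\Delta\eta}{2\pi}\sum_{j=1}^{N} f(\eta_j)\,e^{-i(\eta_j H_1 + H_2)t}\,h_0,
\end{equation}
which is exactly the composite trapezoidal rule on $[-L,L]$ with step $\Delta\eta$ applied to $g(\eta):= \tfrac{1}{2\pi} f(\eta)\,e^{-i(\eta H_1 + H_2)t}\,h_0$, whose integral over $\mathbb{R}$ equals $V(t)h_0$ by the residue computation of Appendix \ref{sec:proof:residue}.

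It then remains to bound $\bigl\| P_{\mathcal{H}}U_{DV}(t)h_0 - \int_{\mathbb{R}} g(\eta)\,\dif\eta \bigr\|$, which I would split into (i) the tail $\tfrac{1}{2\pi}\int_{|\eta|>L} g$; (ii) the trapezoidal error of $g$ on $[-L,L]$; and (iii) the effect of replacing $c_N$ by $1$. For (i), using $\|e^{-i(\eta H_1 + H_2)t}\|=1$ and the $\eta^{-2}$ decay of $f$, the tail is at most $\tfrac{\|h_0\|}{\pi}\int_{|\eta|>L}\tfrac{\dif\eta}{\eta^2+1} = \tfrac{2}{\pi}\arctan(1/L)\,\|h_0\| = \mathcal{O}(\|h_0\|/L)$. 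For (iii), the same estimate together with smoothness of $f$ gives $c_N = 1+\mathcal{O}(1/L+\Delta\eta)$, so this term is again $\mathcal{O}(1/L)$. For (ii), I would exploit that $g$ extends analytically to the strip $|\operatorname{Im}\eta|<1$: the factor $e^{-i(zH_1+H_2)t}$ is entire with $\|e^{-i(zH_1+H_2)t}\|\le e^{|\operatorname{Im}z|\,\|H_1\|t}$ (logarithmic-norm bound), and $f$ is analytic there with poles only at $\pm i$. The Euler--Maclaurin/analyticity estimate for the trapezoidal rule then gives an error $\mathcal{O}\bigl(e^{-\pi a N/L}\,e^{aT\|H_1\|}\bigr)$ for any fixed $a<1$; if one prefers to avoid analyticity, two derivatives of $g$ (controlled by Duhamel's formula, $\|\partial_\eta^k e^{-i(\eta H_1+H_2)t}\|\le (t\|H_1\|)^k$) give the cruder $\mathcal{O}\bigl((1+T\|H_1\|)^2 L^2/N^2\bigr)$. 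All three bounds are uniform for $t\in[0,T]$.

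Combining, $\|P_{\mathcal{H}}U_{DV}(t) - V(t)\| = \mathcal{O}\bigl(1/L + (1+T\|H_1\|)^2 L^2/N^2\bigr)$ on $[0,T]$, so taking $L = \Theta(1/\delta)$ and $N = \Omega\bigl((1+T\|H_1\|)\,\delta^{-3/2}\bigr)$ — or, with the analyticity estimate, $N = \mathcal{O}\bigl((T\|H_1\|+\log(1/\delta))/\delta\bigr)$ — forces the right-hand side below $\mathcal{O}(\delta)$, and rounding $N$ up to a power of two then uses the $\mathcal{O}(\log(1/\delta))$ ancilla qubits claimed in the text. The step I expect to be the main obstacle is (ii): the integrand oscillates at frequency $\sim t\|H_1\|$, and $\|H_1\|$ itself grows like $\Delta x^{-2}$ for a discretised Laplacian, so any elementary quadrature bound degrades with $t$ and with the spatial mesh. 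For a fixed final time this dependence can be absorbed into the constants, but a genuinely sharp statement would require observing that the reconstruction only probes $\mathbf w(t,p)$ near one fixed $p^\star>0$, where the leftward transport of $\partial_t\mathbf w = H_1\partial_p\mathbf w - iH_2\mathbf w$ renders the relevant $\eta$-content effectively band-limited. A minor secondary point is to verify that \eqref{eqn:prop:dv:form} is positive-definite, which is immediate from $f>0$.
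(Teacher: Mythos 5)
Your proof follows essentially the same route as the paper's: set up the discrete weighted bilinear form on $\mathcal{H}\otimes\mathbb{C}^N$, embed $\mathcal{H}$ by constant tuples, observe that $P_{\mathcal{H}}U_{DV}(t)h_0$ is a quadrature of the integral from the continuous-variable proof, and bound the quadrature error. The paper's version, however, stops at asserting that the sum-minus-integral is $\mathcal{O}(\delta)$ once $L=\mathcal{O}(1/\delta)$ and $L\Delta\eta'=\mathcal{O}(\delta)$ are chosen; you actually carry out the estimate, splitting into tail, quadrature error, and normalisation error, which is the part the paper leaves implicit. Two minor departures worth flagging: you insert a normalisation constant $c_N$ to make the embedding exactly isometric (the paper accepts a non-unit constant and only checks nondegeneracy), and you treat the sum as a trapezoidal rule whereas the paper's grid $\eta_j'=-1+j\Delta\eta'$ is a one-sided Riemann sum — your version gives the slightly better $N=\Omega(\delta^{-3/2})$ rather than the paper's $N=\mathcal{O}(\delta^{-2})$ for the $\delta$-scaling, at the cost of a careful smoothness argument. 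Your closing remark about the hidden $T\|H_1\|$ dependence (and its blow-up as the spatial mesh is refined) is a genuine and correct observation that the paper does not address; it does not invalidate the theorem as stated for a fixed matrix $A_{DV}$, but it is exactly the reason the later complexity theorem absorbs this dependence into $\tau=s(A)\|A\|_{\max}T$ and picks up the extra $\delta$-powers.
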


\begin{proof}
Let us define the space $\mathcal{H}_1$ to be a subspace of $\mathcal{H}\times\mathbb{C}^{N}$ with the following bilinear form
\begin{equation}\label{eqn:schro:dv:bilinearH}
    \langle h, h^\prime \rangle := \sum_{j=1}^{N} \frac{f(L\eta_j^\prime) L\Delta \eta^\prime}{2\pi} \langle h_j, h^\prime_j \rangle , \quad h=(h_1,\cdots,h_N) \in \mathcal{H}_1.
\end{equation}
$\mathcal{H}$ is embedded in $\mathcal{H}_1$ trivially by 
\begin{equation}
    h_0 \mapsto h = (h_1,\cdots,h_N) = ( h_0, \cdots, h_0).
\end{equation}
An operator $U$ acting on $\mathcal{H}_1$ is projected to the operator $P_{\mathcal{H}}U$ acting on $\mathcal{H}$ defined by
\begin{equation}
    P_{\mathcal{H}}U \cdot h_0 := \sum_{j=1}^{N} \frac{f(L\eta_j^\prime)L\Delta \eta^\prime}{2\pi} U(h)_j.
\end{equation}
The definitions are well defined since for $h_0, h_0^\prime \in \mathcal{H}$ one has
\begin{equation}
    \langle h, h^\prime \rangle = 
    \left(\sum_{j=1}^{N} \frac{f(L\eta_j^\prime)L\Delta \eta^\prime}{2\pi}\right) \langle h_0, h^\prime_0 \rangle, \quad \langle h, h^\prime \rangle = 0  \Leftrightarrow \langle h_0, h_0^\prime \rangle=0.
\end{equation}
Clearly $U_{DV}(t)$ is a unitary operator. Given $h_0, h_0^\prime \in \mathcal{H}$ and equation \eqref{eqn:schro:cv:projU} one gets
\begin{equation}\label{eqn:schro:dv:projU}
    \left\| P_{\mathcal{H}} U_{DV}(t) h_0 - V(t) h_0 \right\| = \left\| \sum_{j=1}^{N} \frac{f(L\eta_j^\prime)L\Delta \eta^\prime}{2\pi} e^{-i(L\eta_j^\prime H_1+H_2)t}h_0 - \int_{\mathbb{R}} \frac{f(\eta)}{2\pi} e^{-i(\eta H_1+H_2)t} h_0 d\eta \right\| = \mathcal{O}(\delta),
\end{equation}
if one chooses $L=\mathcal{O}(\frac{1}{\delta})$, 
$L\Delta \eta^\prime = \mathcal{O}(\delta)$. Therefore $U_{DV}(t)$ is a dilation of $V(t)$ up to precision $\delta$.

\end{proof}

Given sparse access of $H_1$ and $H_2$, one can apply the Schr\"odingerisation method to construct a block-encoding of $V(t) = \exp(-A_{DV}T)$.  Instead of simulating $e^{-H_1T}$ with Lemma \ref{lemma:block:expAt} as in Section \ref{sec:block}, we use the linear combination of unitaries (LCU) approach as shown in Lemma \ref{lemma:block:lcu} to approximate
\begin{equation}
    P_{\mathcal{H}}U_{DV}(T)= \sum_{j=1}^{N} \frac{f(\eta_j)\Delta \eta}{2\pi} e^{-i(\eta_j H_1+H_2)T} \triangleq \sum_{j=1}^{N} y_{j} U_j.
\end{equation}
To implement each $U_j$, we apply the Lie-Trotter product formula with $K$ segments. The results are summarized in Theorem \ref{thm:schro:dv:expAt}. For the details of proof, see Appendix \ref{sec:proof:schro:expAt}.

\begin{mythm}\label{thm:schro:dv:expAt}
    Given sparse access to the $2^m \times 2^m$ matrices $H_{1}$, $H_{2}$, the LCU coefficient oracle $O_{coef}$ and the unitary $U_{initial}$ that prepares the initial quantum state $|u(0)\rangle$ to precision $\delta$. With the Schr\"odingerisation approach, the state $|u(T)\rangle$ can be prepared with query complexity
    \begin{equation}
        N_{Query,Schr} = \tilde{\mathcal{O}} \left(  \left(\frac{\left\|u(0)\right\|}{\left\|u(T)\right\|}\right)^{4} \frac{\tau^2}{\delta^{3}} \right), 
    \end{equation}
    $\mathcal{O}(m+\log(\tau/\delta))$ ancilla qubits and
    \begin{equation}
        N_{Gates,Schr} = \tilde{\mathcal{O}}\left( m  \left(\frac{\left\|u(0)\right\|}{\left\|u(T)\right\|}\right)^{4} \frac{\tau^2}{\delta^{3}} \right)
    \end{equation}
    additional two-qubit gates with $\tau = s(A)\|A\|_{\max}T$.
\end{mythm}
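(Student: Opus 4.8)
The plan is to turn the explicit linear-combination-of-unitaries (LCU) identity stated just before the theorem,
\[
P_{\mathcal H}U_{DV}(T)=\sum_{j=1}^{N}y_j U_j,\qquad y_j=\frac{f(\eta_j)\Delta\eta}{2\pi}>0,\quad U_j=e^{-i(\eta_j H_1+H_2)T},\quad f(\eta)=\frac{2}{\eta^2+1},
\]
into a quantum circuit, and then apply it to $U_{initial}|0\rangle\approx|u(0)\rangle$ and post-select. First I would fix the Fourier window and mesh as in the preceding (discretised) dilation theorem: take $L=\mathcal{O}(1/\delta)$ and $\Delta\eta$ small enough that $\|V(T)-\sum_j y_jU_j\|_2=\mathcal{O}(\delta)$. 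This uses that $f$ is analytic in the strip $|\operatorname{Im}\eta|<1$ (its only singularities are the poles at $\pm i$) and that $\eta\mapsto e^{-i(\eta H_1+H_2)T}$ is entire, so the trapezoidal quadrature over $[-L,L]$ is controlled and $N$ stays polynomial in $\tau/\delta$, hence $\log N=\mathcal{O}(\log(\tau/\delta))$; one also notes $\sum_j|y_j|=\sum_j y_j\to 1$, so the LCU subnormalisation is $\Theta(1)$.

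Second, I would build the ingredients on qubits. From $O_{H_1},O_{H_2}$, Lemma~\ref{lemma:block:hamiltonian} gives $(s(A)\|A\|_{\max},\mathcal{O}(m))$-block-encodings of $H_1$ and $H_2$; feeding these into the QSVT Hamiltonian-simulation lemma (Lemma~\ref{lemma:block:expiHt}) implements $e^{-i\eta_j H_1 t}$ and $e^{-iH_2 t}$ to accuracy $\varepsilon'$ with $\tilde{\mathcal{O}}(s(A)\|A\|_{\max}|t|+\log(1/\varepsilon'))$ queries. Each $U_j$ is then approximated by the first-order Lie-Trotter product $\bigl(e^{-i\eta_j H_1 T/K}\,e^{-iH_2 T/K}\bigr)^{K}$, whose error is $\mathcal{O}(|\eta_j|\,\|H_1\|\,\|H_2\|\,T^2/K)$; to push this below a per-term tolerance $\varepsilon$ one needs $K=\tilde{\mathcal{O}}(\tau^2 L/\varepsilon)$ segments (the worst case $|\eta_j|=L$ being decisive), with $\tau=s(A)\|A\|_{\max}T$. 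Assembling $\mathrm{SELECT}=\sum_j|\eta_j\rangle\!\langle\eta_j|\otimes U_j$ together with $\mathrm{PREPARE}$ built from $O_{coef}$ (Lemma~\ref{lemma:block:lcu}) yields an $\mathcal{O}(\varepsilon)$-accurate block-encoding of $P_{\mathcal H}U_{DV}(T)$, hence, by the discretisation bound, an $(\Theta(1),\,\mathcal{O}(m+\log(\tau/\delta)),\,\mathcal{O}(\delta))$-block-encoding of $V(T)$. The ancilla count is $\mathcal{O}(m)$ for the two $H_i$-encodings, $\mathcal{O}(\log N)=\mathcal{O}(\log(\tau/\delta))$ for the Fourier register, and $\mathcal{O}(1)$ for the QSVT phase qubits, summing to $\mathcal{O}(m+\log(\tau/\delta))$.

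Third, I would run this block-encoding on $U_{initial}|0\rangle$ and measure the Fourier register in the coefficient basis. The success amplitude of the projection is $\Theta(\|u(T)\|/\|u(0)\|)$, so, exactly as in Theorem~\ref{thm:block:expAt:nonHer}, oblivious amplitude amplification reaches $\Omega(1)$ success probability in $\mathcal{O}(\|u(0)\|/\|u(T)\|)$ rounds, each round invoking the block-encoding once. Because this amplification is oblivious it magnifies the block-encoding and state-preparation errors, so to keep the final error $\le\delta$ one must tighten the per-term tolerance $\varepsilon$, the Trotter depth $K$, and the accuracy demanded of $U_{initial}$, and each such tightening reintroduces a factor $\|u(0)\|/\|u(T)\|$; combined with the fact that $L=\Theta(1/\delta)$ feeds into both $N$ and $K$, the full accounting (carried out in Appendix~\ref{sec:proof:schro:expAt}) gives $N_{Query,Schr}=\tilde{\mathcal{O}}\bigl((\|u(0)\|/\|u(T)\|)^{4}\tau^{2}/\delta^{3}\bigr)$. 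Compiling each oracle query and each controlled single-Trotter step into $\mathcal{O}(m)$ elementary gates yields $N_{Gates,Schr}=\tilde{\mathcal{O}}\bigl(m(\|u(0)\|/\|u(T)\|)^{4}\tau^{2}/\delta^{3}\bigr)$, and the ancilla list above gives $\mathcal{O}(m+\log(\tau/\delta))$.

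The step I expect to be the real work is this error budgeting around the large Fourier mode together with the oblivious amplification. Unlike the digital block-encoding route of Theorem~\ref{thm:block:expAt:nonHer}, where the non-unitary factor $e^{-H_1 T}$ costs a factor $1/\lambda_0(A)$, here every generator is already Hermitian, but the dilation lives on a window of width $L=\Theta(1/\delta)$ in $\eta$; keeping the discrete $\eta$-sum only $\mathcal{O}(\delta)$, rather than $\mathcal{O}(\sqrt\delta)$, away from the continuous integral is what forces the quadrature estimate, and disentangling how many powers of $\|u(0)\|/\|u(T)\|$ come from the number of amplification rounds versus from the tolerances that amplification forces on the LCU terms, the Trotter depth, and the state preparation is what fixes the exponents $4$ and $3$ in the bound. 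Everything else, namely the block-encoding arithmetic, the QSVT Hamiltonian simulation, the $\mathrm{PREPARE}$ circuit, and the gate compilation, is standard and contributes only the logarithmic overhead absorbed into $\tilde{\mathcal{O}}$.
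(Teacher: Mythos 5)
Your overall architecture (LCU over the Fourier modes, Trotter within each term, single oblivious-amplitude-amplification wrapper) is the same as the paper's, but there is a genuine arithmetic gap that prevents your argument from yielding the stated exponents. You estimate the Trotter depth via the commutator bound, $K=\tilde{\mathcal{O}}(|\eta_j|\,\|H_1\|\,\|H_2\|\,T^2/\varepsilon)=\tilde{\mathcal{O}}(\tau^2 L/\varepsilon)$, which is \emph{linear} in the window size $L$. The paper instead uses the norm-squared bound
\(
K=\mathcal{O}\!\left((\|\eta_jH_1\|_2+\|H_2\|_2)^2T^2/\delta'\right)=\mathcal{O}(\tau^2L^2/\delta'),
\)
which is \emph{quadratic} in $L$. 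With $L=\Theta(1/\delta')$ and $\delta'=\Theta(\delta\|u(T)\|/\|u(0)\|)$, the paper's $K=\mathcal{O}(\tau^2/(\delta')^3)$ combined with one more $\|u(0)\|/\|u(T)\|$ from amplitude amplification gives exactly $\tilde{\mathcal{O}}((\|u(0)\|/\|u(T)\|)^4\tau^2/\delta^3)$. Your tighter $K=\tilde{\mathcal{O}}(\tau^2 L/\varepsilon)=\mathcal{O}(\tau^2/(\delta')^2)$ would instead yield $\tilde{\mathcal{O}}((\|u(0)\|/\|u(T)\|)^3\tau^2/\delta^2)$. So the exponents $4$ and $3$ in the theorem do \emph{not} follow from your intermediate step, and you cannot recover them by ``the full accounting'' without either switching to the coarser norm-scaling bound (as the paper does) or claiming a strictly stronger result than the theorem and justifying why the commutator bound applies uniformly; as written, your proof is internally inconsistent between the $K$ estimate and the claimed conclusion.

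A second, smaller omission: you treat $\mathrm{SELECT}=\sum_j|\eta_j\rangle\!\langle\eta_j|\otimes e^{-i\eta_jH_1T/K}$ as available once each $U_j$ is block-encoded, but with $N=\mathcal{O}(\tau/(\delta')^2)$ a naive controlled-$U_j$ ladder would cost $\Omega(N)$ per $\mathrm{SELECT}$ invocation, which is far too expensive. The paper exploits the arithmetic progression $\eta_j=-L+j(2L/N)$ to factor
\(
\mathrm{SELECT}=e^{iLH_1T/K}\sum_j|j\rangle\!\langle j|\otimes\bigl(e^{-i(2L/N)H_1T/K}\bigr)^{j},
\)
and applies the power-of-two controlled decomposition (citing an auxiliary lemma) to realize this with only $\mathcal{O}(\log N)$ uses of a single Trotter-step simulation circuit, which is what keeps the per-segment query count $\tilde{\mathcal{O}}(1)$. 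Your sketch does not supply this, so as stated it does not establish the query complexity bound even granting the Trotter estimate.
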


\begin{myremark}
We use the first order Lie-Trotter product formula to give an explicit result. Higher order product formulas can also be applied and the results depend on the order as shown in \cite{an2023linear}. 
\end{myremark}

Compared to Theorem \ref{thm:block:expAt:nonHer}, the complexity of the Schr\"odingerisation method depends on $\delta^{-3}$ instead of $\delta^{-1}$. This comes from the truncation of Fourier modes which leads to a larger number of segments in the Trotter decomposition. On the other hand, the Schr\"odingerisation method is independent of $\lambda_{0}(A)$. In the case of solving a PDE with finite difference methods, $\lambda_{0}(A)$ depends on the mesh size, and thereby depends on $\delta$. 

\begin{myexample}[Heat equation]
If we consider the heat equation \eqref{mod:schro:cv:heat}, then
\begin{equation}\label{eqn:schro:dv:heat:eigen}
    \lambda_{0}(A_{DV}) = \mathcal{O}\left( \frac{1}{h^2} \right) = \mathcal{O}\left( \frac{1}{(\delta^\prime)^2} \right), \quad s(A_{DV}) = 3, \quad \|A_{DV}\|_{\max} = \mathcal{O}(1),
\end{equation}
where $\delta^\prime = \delta \left\|u(T)\right\|/\left\|u(0)\right\|$. Substituting \eqref{eqn:schro:dv:heat:eigen} into \eqref{eqn:block:expAt:Q}, we find that Theorem \ref{thm:block:expAt:nonHer} and Theorem \ref{thm:schro:dv:expAt} give comparable complexities.

\end{myexample}

The Schr\"odingerisation formalism can also be used in hybrid continuous-variable discrete-variable settings, where one can have part of the system represented by qubits and the other part by qumodes. These can arise when one chooses to only discretise $A$ but not $\eta$, in which case  $U_{hyb}(t)=\exp(-i A_{DV} \otimes \hat{\eta}t)$. If one chooses only to discretise $\eta$ but not $A$, then  $U_{hyb}(t)=\exp(-i(\hat{p}^2+V(\hat{x})) \otimes Dt)$.

\section{Discussion}
\label{sec:discussion}
The basis of quantum computation lies in quantum simulation, where the system obeys quantum dynamics evolving under Schr\"odinger's equation. Without interaction with the environment, this restricts the dynamics to evolve under unitary operations. Nagy's Theorem provides a pivotal link between contraction operators and their unitary dilation. This theorem has far-reaching implications, applicable to both finite and infinite-dimensional cases. In our exploration, we study these distinct scenarios, demonstrating distinct representations by quantum systems. Specifically, the finite-dimensional case involves dilation of Hilbert spaces constructed from qubit systems, while the infinite-dimensional case employs continuous-variable quantum modes, or 'qumodes'. Moreover, we also establish Schr\"odingerization as an alternative constructive proof of the dilation theorem.

When $\mathcal{H}$ is finite-dimensional, block-encoding emerges as a powerful technique for representing arbitrary matrices using unitary matrices. However, the transition to infinite-dimensional $\mathcal{H}$ presents a challenge, as $V(t)$ can no longer be encapsulated within a finite-dimensional matrix representation. Herein, the Schr\"odingerisation method proves invaluable. Specifically, for operators in the form of $V(t) = \exp(-At)$, which govern evolution systems analogous to \eqref{mod:origin}, the Schr\"odingerisation approach provides a transformative solution. This method effectively elevates the equation into a Schr\"odinger-type equation in only one higher dimension. Consequently, the unitary solution operator for the derived Schr\"odinger-type equation encompasses the transformed operator $V(t)$. This technique is not only naturally compatible with continuous-variable quantum modes but can also be extended to qubits through suitable discretisation.

In the context of the continuous-variable version, the Schr\"odingerisation procedure necessitates the utilization of a single qumode, analogous to the qubitisation process, and we called this qumodisation. This qumode is introduced through a warped phase transformation coupled with the Fourier transform. In contrast, when considering the qubit version, the Schr\"odingerisation approach mandates the incorporation of approximately $\mathcal{O}(\log(1/\delta))$ ancilla qubits due to the discretisation of the continuous variable. However, this version is characterized by a relatively simpler implementation.

Comparing this with the block-encoding technique, the query complexity of the Schr\"odingerisation method is contingent upon $\mathcal{O}(\delta^{-3})$, but is {\it independent} of the minimal eigenvalue of the matrix $A$. This distinctive attribute implies that, in tackling specific types of partial differential equations (PDEs), the Schr\"odingerisation method can potentially match or even surpass the efficiency of general block-encoding methods. For instance, when applied to solve the heat equation, the Schr\"odingerisation method exhibits a comparable complexity to that of block-encoding.

In conclusion, the exploration of Nagy's theorem and its implications, coupled with the simple and generic Schr\"odingerisation approach, enhances our understanding of quantum dynamics and computation. These techniques offer promising avenues for addressing challenges posed by unitary and non-unitary operators, paving the way for more efficient and accurate quantum algorithms. As the field of quantum computing continues to evolve, the interplay between operator dilations, unitary representations, and specialized techniques like Schr\"odingerisation promises to shape the future of quantum computation and simulation. 

\section*{Acknowledgement}
SJ was partially supported by the NSFC grant No. 12031013, the Shanghai Municipal Science
and Technology Major Project (2021SHZDZX0102), and the Innovation Program of Shanghai Municipal Education Commission (No. 2021-01-07-00-02-E00087). NL acknowledges funding from the Science and Technology Program of Shanghai, China (21JC1402900).
LZ was partially supported by the NSFC grant No. 12271360, the Shanghai Municipal Science and Technology Project (22JC1401600).
SJ, NL and LZ are also supported by the Fundamental Research Funds for the Central Universities.

\bibliographystyle{plain}
\bibliography{ref.bib}

\appendix

\section{Nagy's dilation theorem}
\label{sec:appendix:nagy}
\begin{mydef}
Let $G$ be a grouop. 
\begin{enumerate}
    \item A mapping $V(t)$ on $G$, whose values are bounded operators on a Hilbert space $\mathcal{H}$, is said to be \textit{positive definite} if $V(s^{-1})=V(t)^\dagger $ for every $s\in G$, and 
    \begin{equation}
        \sum_{s\in G} \sum_{t\in G} \langle V(t^{-1}s)h(s), h(t)\rangle \geq 0
    \end{equation}
    for every finitely nonzero mapping $h(s)$ from $G$ to $\mathcal{H}$, that is, which has values different from $0$ on a finite subset of $G$ only.
    
    \item A mapping $U(t)$ on $G$ is  a \textit{unitary representation} of the group $G$ if its values are unitary operators on a Hilbert space $\mathfrak{K}$ and it satisfies the condition $U(e)=I$ ($e$ being the identity element of $G$) and $U(s)U(t)=U(st)$ for $s,t\in G$.
\end{enumerate}
\end{mydef}

\begin{mythm}[{\cite[Theorem 7.1]{nagy2010harmonic}}]
Let $V(t)$ be a mapping defined on the group $G$, whose values are operators on $\mathcal{H}$. Then $V(t)$ is positive definite with $T(e)=I_{\mathcal{H}}$ if and only if there exists $U(t)$, a unitary representation of $G$ on a space $\mathfrak{K}$ containing $\mathcal{H}$ as a subspace, such that
\begin{equation}
    V(t) = P_{\mathcal{H}} U(t),
\end{equation}
and 
\begin{equation}\label{eqn:append:nagy:min}
    \mathfrak{K} = \bigvee_{s\in G} U(t)\mathcal{H} \text{ (minimality condition)}.
\end{equation}
This unitary representation of $G$ is determined by the mapping $V(t)$ up to isomorphism so that one can call it ``the minimal unitary dilation'' of the mapping $V(t)$.
\end{mythm}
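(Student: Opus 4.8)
This statement is the classical Sz.-Nagy dilation theorem for positive-definite operator-valued functions on a group, and the plan is to prove it by the standard Gelfand--Naimark--Segal / Kolmogorov construction, split into three pieces. First, the forward implication, which is a short computation: if $V(t)=P_{\mathcal{H}}U(t)$ for a unitary representation $U$ of $G$ on $\mathfrak{K}\supseteq\mathcal{H}$, then for $h,h'\in\mathcal{H}$ one pushes the projection around to get $\langle V(s^{-1})h,h'\rangle=\langle U(s^{-1})h,h'\rangle=\langle h,U(s)h'\rangle=\langle h,V(s)h'\rangle$, so $V(s^{-1})=V(s)^{\dagger}$; and for any finitely supported $h\colon G\to\mathcal{H}$,
\[
\sum_{s,t}\langle V(t^{-1}s)h(s),h(t)\rangle=\sum_{s,t}\langle U(t^{-1}s)h(s),h(t)\rangle=\sum_{s,t}\langle U(s)h(s),U(t)h(t)\rangle=\Bigl\|\sum_{s}U(s)h(s)\Bigr\|^{2}\geq 0,
\]
while $V(e)=P_{\mathcal{H}}U(e)=I_{\mathcal{H}}$.

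For the converse I would build the dilation explicitly. Let $\mathcal{K}_{0}$ be the vector space of finitely supported maps $h\colon G\to\mathcal{H}$, equipped with the Hermitian form $[h,h']:=\sum_{s,t}\langle V(t^{-1}s)h(s),h(t)\rangle$; positive definiteness makes this positive semidefinite, so after quotienting by its null space and completing one obtains a Hilbert space $\mathfrak{K}$. The left translation $(U(r)h)(s):=h(r^{-1}s)$ preserves $[\cdot,\cdot]$ by a change of variables in the double sum, hence descends to an isometry of $\mathfrak{K}$; together with $U(r)U(r^{-1})=I$ this makes each $U(r)$ unitary and $r\mapsto U(r)$ a representation. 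Embed $\mathcal{H}$ by $h_{0}\mapsto\delta_{e}h_{0}$ (the function supported at the identity with value $h_{0}$); this is isometric because $[\delta_{e}h_{0},\delta_{e}h_{0}']=\langle V(e)h_{0},h_{0}'\rangle=\langle h_{0},h_{0}'\rangle$. Since $U(t)\delta_{e}h_{0}=\delta_{t}h_{0}$, one gets $\langle P_{\mathcal{H}}U(t)\delta_{e}h_{0},\delta_{e}h_{0}'\rangle=[\delta_{t}h_{0},\delta_{e}h_{0}']=\langle V(t)h_{0},h_{0}'\rangle$, i.e.\ $V(t)=P_{\mathcal{H}}U(t)$; and because the vectors $\delta_{t}h_{0}=U(t)\delta_{e}h_{0}$ span $\mathcal{K}_{0}$, they are dense, so the minimality condition $\mathfrak{K}=\bigvee_{s\in G}U(s)\mathcal{H}$ holds automatically.

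For uniqueness up to isomorphism, given two minimal dilations $(\mathfrak{K}_{1},U_{1})$ and $(\mathfrak{K}_{2},U_{2})$ I would define $\Phi$ on finite sums by $\Phi\bigl(\sum_{j}U_{1}(s_{j})h_{j}\bigr):=\sum_{j}U_{2}(s_{j})h_{j}$. The point is the identity $\langle U_{i}(s_{k}^{-1}s_{j})h_{j},h_{k}\rangle=\langle V(s_{k}^{-1}s_{j})h_{j},h_{k}\rangle$ for $h_{j},h_{k}\in\mathcal{H}$ (project onto $\mathcal{H}$), which gives $\|\sum_{j}U_{1}(s_{j})h_{j}\|^{2}=\sum_{j,k}\langle V(s_{k}^{-1}s_{j})h_{j},h_{k}\rangle=\|\sum_{j}U_{2}(s_{j})h_{j}\|^{2}$; hence $\Phi$ is a well-defined isometry on a dense subspace, extends to a unitary $\mathfrak{K}_{1}\to\mathfrak{K}_{2}$, fixes $\mathcal{H}$ pointwise, and intertwines $U_{1}$ with $U_{2}$.

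I expect the main obstacle to be the middle step: one must check carefully that the form $[\cdot,\cdot]$ behaves well through the quotient and completion, that the translation operators really pass to $\mathfrak{K}$ as unitaries rather than merely densely defined isometries, and that the embedding $\mathcal{H}\hookrightarrow\mathfrak{K}$ is compatible with $P_{\mathcal{H}}$ in precisely the bra--ket sense used in the statement. None of this is deep, but it is exactly where each hypothesis — $V(e)=I$, the positive-definiteness inequality, and the symmetry $V(s^{-1})=V(s)^{\dagger}$ — is consumed, so the bookkeeping has to be exact.
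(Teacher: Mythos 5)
Your proposal is correct and follows essentially the same route as the paper's proof in Appendix~\ref{sec:appendix:nagy}: the forward direction is the identical norm-squared computation, and the converse is the same Kolmogorov/GNS construction on finitely supported $\mathcal{H}$-valued functions, with the same bilinear form, quotient by the null space, completion, left-translation unitaries, embedding via $\delta_e h_0$, and minimality from the spanning of $\delta_t h_0 = U(t)\delta_e h_0$. The only difference is that you also sketch the uniqueness-up-to-isomorphism argument (via the intertwining isometry on finite sums), which the paper states in the theorem but does not explicitly prove in its appendix.
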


\begin{proof}
\textbf{Sufficiency}. Suppose $U(t)$ is a unitary representation of $G$, then
\begin{equation}
\begin{aligned}
    T(e) &= P_{\mathcal{H}}U(e) = I_{\mathcal{H}}, \\
    T(t^{-1}) &= P_{\mathcal{H}}U(t^{-1}) = P_{\mathcal{H}}U(t)^\dagger  = (P_{\mathcal{H}}U(t))^\dagger  = V(t)^\dagger ,
\end{aligned}
\end{equation}
and 
\begin{equation}
    \sum_{s\in G} \sum_{t\in G} \langle P_{\mathcal{H}}U(t^{-1}s)h(s), h(t)\rangle = \sum_{s\in G} \sum_{t\in G} \langle U(t)^\dagger U(t)h(s), h(t)\rangle = \left\| \sum_{s\in G} U(t)h(s) \right\|^2 \geq 0
\end{equation}
for every finitely nonzero mapping $h(s)$ from $G$ to $\mathcal{H}$.

\textbf{Necessity}. Consider the set $\mathcal{H}_1$, obviously linear, of the finitely nonzero mapping $h(s)$ from $G$ to $\mathcal{H}$, and  define on $\mathcal{H}_1$ a bilinear form by 
\begin{equation}
    \langle H,H^\prime \rangle = \sum_{s}\sum_{t}\langle T(t^{-1}s)H(s), H^\prime(t)\rangle, \quad [H=H(s),H^\prime=H^\prime(s)].
\end{equation}
Since $\langle H,H^\prime \rangle\geq 0$, the elements for which $\langle H,H^\prime \rangle=0$ form a linear manifold $\mathcal{N}$ in $\mathcal{H}_1$ by Schwarz's inequality. Therefore the form $\langle H,H^\prime \rangle$ defines in a natural way a bilinear form $(k,k^\prime)$ on the quotient space $\mathfrak{K}_0 = \mathcal{H}_1/\mathcal{N}$. The corresponding quadratic form $(k,k)$ is positive definite on $\mathfrak{K}_0$, and $\|k\| = (k,k)^{1/2}$ is a norm on $\mathfrak{K}_0$; by completing $\mathfrak{K}_0$ with respect to this norm one obtains a Hilbert space $\mathfrak{K}$.

Now we embed $\mathcal{H}$ in $\mathfrak{K}$ (and even in $\mathfrak{K}_0$) by identifying the element $h$ of $\mathcal{H}$ with the mapping $H = H(s) = \delta_e(s)h$ (where $\delta_e (e) = 1$ and $\delta_e (s) = 0$ for $s \neq e$), or, more precisely, with the equivalence class modulo $\mathcal{N}$ determined by this mapping. This identification is allowed because it preserves the linear and metric structure of $\mathcal{H}$. Indeed, one has
\begin{equation}
    \langle H, H^\prime \rangle =  \sum_{s}\sum_{t} \langle T(t^{-1}s)\delta_e(s)h, \delta_e(t)h^\prime \rangle  = \langle T(e)h, h^\prime \rangle  = \langle h,h^\prime \rangle .
\end{equation}

Now we set, for $H=H(s)\in \mathcal{H}_1$ and $a\in G$, 
\begin{equation}
    H_{a} = H(a^{-1}s).
\end{equation}
It is obvious that 
\begin{equation}
    \langle H_{a}, H^\prime_{a} \rangle = \sum_{s}\sum_{t} \langle T(t^{-1}s)H(a^{-1}s), H^\prime(a^{-1}t)\rangle = \sum_{\sigma}\sum_{\tau} \langle T(\tau^{-1}\sigma)H(\sigma), H^\prime(\tau)\rangle = \langle H,H^\prime \rangle.
\end{equation}
Therefore $H\in\mathcal{N}$ implies $H_{a}\in\mathcal{N}$ and consequently the transformation $H\rightarrow H_{a}$ in $H$ generates a transformation $k\rightarrow k_{a}$ of the equivalence classes modulo $\mathcal{N}$. Setting $U(a)k=k_{a}$, thus we define for every $a\in G$ a linear transformation of $\mathfrak{K}_0$ onto $\mathfrak{K}_0$, such that $U(e)=I$, $U(a)U(b)=U(ab)$ and $(U(a)k, U(a)k^\prime)=(k,k^\prime)$. These transformations on $\mathfrak{K}_0$ can be extended by continuity to unitary transformations on $\mathfrak{K}$, forming a representation of the group $G$.

For $H, H^\prime\in\mathcal{H}$ we obtain (setting $\delta_a(s)=\delta_e(a^{-1}s)$)
\begin{equation}
    \langle U(a)H, H^\prime \rangle_{\mathfrak{K}}  = \langle \delta_{a}h, \delta_{e}h^\prime \rangle = \sum_{s}\sum_{t} \langle T(t^{-1}s)\delta_{a}(s)h, \delta_{e}(t)h^\prime \rangle  = \langle T(a)h, h^\prime \rangle ,
\end{equation}
and hence
\begin{equation}
    T(a) = P_{\mathcal{H}}U(a) \quad \text{ for every } a \in G.
\end{equation}

\textbf{Minimality}. Observe  that every mapping $H = H(s) \in \mathcal{H}_1$ can be considered as a finite sum of terms of the type $\delta_{\sigma}(s)h$ (i.e., the type $(\delta_e(s)h)_{\sigma}$ $(\sigma \in G)$), and hence every element $k$ of $\mathfrak{K}_0$ can be decomposed into a finite sum of terms of the type $U (\sigma)h$ $(\sigma\in G, h\in \mathcal{H})$. This implies \eqref{eqn:append:nagy:min}.
\end{proof}

\section{Lemmas of Block-encoding}
\label{sec:lemmas}
Here we give some known results about blockencoding.

\begin{mylemma}[{\cite[Lemma 52]{gilyen2019quantum}}]\label{lemma:block:lcu}
Let $A=\sum_{j=1}^{N} y_jA_j$ be an $m$-qubit operator, $\log_2(N)$ be an integer and $\delta\in\mathbb{R}_{+}$. Suppose that $O_{coef}$ is an oracle for the LCU coefficients $y_j$, $W=\sum_{j=0}^{N-1}|j\rangle \langle j|\otimes U_j$ is an $m+a+\log_2(N)$-qubit unitary such that for all $j\in\{0,\dots,N\}$, $U_j$ is an $(\alpha,a,\delta_1)$-block-encoding of $A_j$. Then one can implement a $(\alpha, a+\log_2(N), \alpha\delta_1)$-block-encoding of $A$, with a single use of $W$ and $O_{coef}$.
\end{mylemma}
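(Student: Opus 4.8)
The plan is to realise $A$ by the standard \emph{select--prepare} (LCU) circuit and then read off its top-left block. I would take the $l:=a+\log_2 N$ ancilla qubits of the target block-encoding to consist of the $a$ ancilla qubits shared by all the $U_j$ together with the $\log_2 N$ index qubits on which $O_{coef}$ acts, and set
\[
  U_A \;:=\; \bigl(O_{coef}^{\dagger}\otimes I_{a}\otimes I_{m}\bigr)\; W \;\bigl(O_{coef}\otimes I_{a}\otimes I_{m}\bigr),
\]
after permuting tensor factors so that the index register of $W=\sum_{j}|j\rangle\langle j|\otimes U_{j}$ is identified with the register carrying $O_{coef}$. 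Since the index register has exactly $\log_2 N$ qubits, $W$ is genuinely unitary (there is no orthogonal-complement contribution to track) and $U_A$ is unitary on $m+a+\log_2 N$ qubits, as required.

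Next I would unwind the three layers. First record that ``$O_{coef}$ is an oracle for the LCU coefficients'' means $\langle j|O_{coef}|0^{\log_2 N}\rangle=\sqrt{y_j}$; by absorbing the phase of each $y_j$ into the corresponding $U_j$ (which leaves it an $(\alpha,a,\delta_1)$-block-encoding of $e^{i\phi_j}A_j$, with $\|A_j-\alpha B_j\|$ unchanged) one may assume $y_j\ge 0$, and I will carry $\beta:=\sum_j y_j$ explicitly, with $\beta=1$ in the normalised case. Applied to $|0\rangle_{\mathrm{idx}}|0^{a}\rangle|\psi\rangle$, the right-hand $O_{coef}$ produces $\sum_j\sqrt{y_j}\,|j\rangle|0^{a}\rangle|\psi\rangle$; the select unitary $W$ turns this into $\sum_j\sqrt{y_j}\,|j\rangle\,U_j|0^{a}\rangle|\psi\rangle$; projecting the index register back onto $|0\rangle_{\mathrm{idx}}$ (via $O_{coef}^{\dagger}$ and then $\langle 0|_{\mathrm{idx}}$) contracts the two factors $\sqrt{y_j}$, while projecting the $a$-qubit ancilla onto $|0^{a}\rangle$ replaces $U_j$ by its block $B_j:=(\langle 0^{a}|\otimes I_m)U_j(|0^{a}\rangle\otimes I_m)$. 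This yields the identity
\[
  \bigl(\langle 0^{l}|\otimes I_m\bigr)\,U_A\,\bigl(|0^{l}\rangle\otimes I_m\bigr)\;=\;\sum_{j=1}^{N} y_j\,B_j .
\]

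Finally I would estimate the error. The hypothesis that $U_j$ is an $(\alpha,a,\delta_1)$-block-encoding of $A_j$ reads $\|A_j-\alpha B_j\|_2\le\delta_1$, so by the triangle inequality
\[
  \Bigl\|A-\alpha\sum_{j} y_j B_j\Bigr\|_2=\Bigl\|\sum_{j} y_j\,(A_j-\alpha B_j)\Bigr\|_2\le\Bigl(\sum_{j}|y_j|\Bigr)\delta_1=\beta\,\delta_1 ,
\]
which, comparing with Definition~\ref{def:block}, says that $U_A$ is an $(\alpha\beta,\,a+\log_2 N,\,\alpha\beta\delta_1)$-block-encoding of $A$; in the normalised case $\beta=1$ this is precisely the claimed $(\alpha, a+\log_2 N, \alpha\delta_1)$-block-encoding (the stated bound $\alpha\delta_1$ is in fact loose, since $\delta_1\le\alpha\delta_1$). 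The construction uses $W$, $O_{coef}$ and $O_{coef}^{\dagger}$ once each.

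There is no substantial obstacle: this is essentially a bookkeeping lemma, and the only delicate points are (i) keeping the two ancilla blocks ($a$ versus $\log_2 N$) and the tensor-factor ordering consistent through the three-layer conjugation, and (ii) fixing the normalisation convention for $O_{coef}$ — which is exactly where the clean statement with subnormalisation $\alpha$ (requiring $\sum_j|y_j|=1$) and the general statement with subnormalisation $\alpha\|y\|_1$ diverge.
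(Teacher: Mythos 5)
This lemma is imported verbatim from \cite{gilyen2019quantum} (Lemma~52); the paper does not reprove it, so there is no proof in the text to compare against. Your prepare--select--unprepare argument is exactly the standard one from that reference (with $O_{coef},O_{coef}^{\dagger}$ playing the role of Gilyén's state-preparation pair $P_R,P_L^{\dagger}$), and it is correct, including your handling of complex coefficients by pushing phases into $W$ and your observation that with $N$ a power of two the select operator needs no orthogonal-complement padding to be unitary.

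One small caution about your closing remark. You derive the error bound $\|y\|_{1}\,\delta_{1}$ (equal to $\delta_{1}$ in the normalised case), and then pass to the stated $\alpha\delta_{1}$ by invoking $\delta_{1}\le\alpha\delta_{1}$. That inequality requires $\alpha\ge 1$, which Definition~\ref{def:block} does not impose; if $\alpha<1$ the bound you actually proved is \emph{weaker} than $\alpha\delta_{1}$, and indeed $\alpha\delta_{1}$ is then unattainable in general, so the lemma as stated tacitly lives in the usual $\alpha\ge 1$ regime. In that regime your proof is complete (and slightly sharper than claimed). You might also note, for bookkeeping, that the paper's index set in the hypothesis, ``for all $j\in\{0,\dots,N\}$'', should read $\{0,\dots,N-1\}$ (or $\{1,\dots,N\}$ to match the sum), a harmless typo inherited into the statement.
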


\begin{mylemma}[{\cite[Lemma 53]{gilyen2019quantum}}]\label{lemma:block:multiply}
If $U$ is an $(\alpha,a,\delta)$-block-encoding of an $m$-qubit operator $A$, and $V$ is an $(\beta,b,\varepsilon)$-block-encoding of an $m$-qubit operator $B$,  then $(I_b \otimes U )(I_a \otimes V )$ is an $(\alpha\beta, a + b,\alpha\varepsilon +\beta\delta)$-block-encoding of $AB$.
\end{mylemma}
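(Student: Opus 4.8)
The plan is to verify directly that $W := (I_b \otimes U)(I_a \otimes V)$ has the claimed rescaled block and then control the approximation error by sub-multiplicativity. Write $\tilde A := \alpha\,(\langle 0^a|\otimes I)\,U\,(|0^a\rangle\otimes I)$ and $\tilde B := \beta\,(\langle 0^b|\otimes I)\,V\,(|0^b\rangle\otimes I)$ for the exact rescaled blocks, so that by hypothesis $\|A-\tilde A\|_2\le\delta$ and $\|B-\tilde B\|_2\le\varepsilon$. First I would fix conventions for the $(a+b+m)$-qubit register, ordering it as ($b$ ancillas of $V$, $a$ ancillas of $U$, $m$ system qubits), and record that in this ordering $I_a\otimes V$ acts as $V$ on the $b$-ancillas and the system while leaving the $a$-ancillas untouched, whereas $I_b\otimes U$ acts as $U$ on the $a$-ancillas and the system while leaving the $b$-ancillas untouched. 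This is the only place an implicit register permutation enters, and it is harmless.

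Next I would compute $(\langle 0^{a+b}|\otimes I_m)\,W\,(|0^{a+b}\rangle\otimes I_m)$ by propagating a system state $|\psi\rangle$ through the circuit. Applying $I_a\otimes V$ to $|0^a\rangle|0^b\rangle|\psi\rangle$ keeps the $a$-ancillas in $|0^a\rangle$ and produces $V|0^b\rangle|\psi\rangle$ on the remaining registers; expanding $V|0^b\rangle|\psi\rangle=\sum_j|j\rangle_b\otimes|\phi_j\rangle$, applying $I_b\otimes U$ yields $\sum_j|j\rangle_b\otimes U(|0^a\rangle|\phi_j\rangle)$; finally projecting the $b$-ancillas onto $\langle 0^b|$ and the $a$-ancillas onto $\langle 0^a|$ collapses this to $(\langle 0^a|\otimes I)\,U\,\bigl(|0^a\rangle\otimes (\langle 0^b|\otimes I)V|0^b\rangle|\psi\rangle\bigr) = (\tilde A/\alpha)(\tilde B/\beta)|\psi\rangle$. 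Hence $\alpha\beta\,(\langle 0^{a+b}|\otimes I)\,W\,(|0^{a+b}\rangle\otimes I) = \tilde A\tilde B$, which is exactly the $(a+b)$-ancilla block of $W$ rescaled by $\alpha\beta$; and $W$ is manifestly an $(a+b+m)$-qubit unitary, being a product of two unitaries each padded by an identity.

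It then remains to bound $\|AB-\tilde A\tilde B\|_2$. I would use the telescoping identity $AB-\tilde A\tilde B = A(B-\tilde B)+(A-\tilde A)\tilde B$, so that $\|AB-\tilde A\tilde B\|_2 \le \|A\|_2\,\|B-\tilde B\|_2 + \|A-\tilde A\|_2\,\|\tilde B\|_2 \le \alpha\varepsilon + \beta\delta$, using $\|\tilde B\|_2\le\beta$ (since $\tilde B/\beta$ is a compression of the unitary $V$, hence has norm at most $1$) and $\|A\|_2\le\alpha$ from the block-encoding normalization. Combining with the previous paragraph gives that $W$ is an $(\alpha\beta,\,a+b,\,\alpha\varepsilon+\beta\delta)$-block-encoding of $AB$.

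The only genuinely delicate point — the one I would write out most carefully — is the register bookkeeping in the block computation: the factorization $(\tilde A/\alpha)(\tilde B/\beta)$ hinges on $V$ never touching the $a$-ancillas and $U$ never touching the $b$-ancillas, so that the intermediate $a$-ancilla state stays exactly $|0^a\rangle$ until $U$ acts and the $b$-ancilla can be projected out independently afterwards. A secondary remark: the error estimate uses $\|A\|_2\le\alpha$ (equivalently one could telescope the other way and use $\|B\|_2\le\beta$); this is the standard block-encoding convention, and if one only assumes the approximate inequality $\|A\|_2\le\alpha+\delta$ the bound degrades by the negligible $\mathcal{O}(\delta\varepsilon)$ term.
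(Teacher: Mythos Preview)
Your argument is correct and is the standard direct verification: compute the $(a+b)$-ancilla block of the product to get $\tilde A\tilde B/(\alpha\beta)$, then telescope the error. The register bookkeeping and the caveat about $\|A\|_2\le\alpha$ versus $\alpha+\delta$ are both handled appropriately. Note, however, that the paper does not supply its own proof of this lemma: it is quoted verbatim as a known result from \cite{gilyen2019quantum} (Lemma~53) in the appendix of auxiliary block-encoding facts, so there is no ``paper's proof'' to compare against beyond the original reference, whose argument is essentially the one you wrote.
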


\begin{mylemma}[{\cite[Corollary 60]{gilyen2019quantum}}]\label{lemma:block:expiHt}
    Let $\delta \in (0, \frac{1}{2} ), t \in \mathbb{R}$ and $\alpha \in \mathbb{R}_+$. Let $U$ be an $(\alpha,a,0)$-block-encoding of the unknown Hamiltonian $H$. In order to implement an $\delta$-precise Hamiltonian simulation unitary $V$ which is an $(1,a + 2,\delta)$-block-encoding of $e^{itH}$, it is necessary and sufficient to use the unitary $U$ a total number of times
    \begin{equation}
        \mathcal{O}\left( \alpha|t| + \frac{\log(1/\varepsilon)}{\log(e+\log(1/\delta)/(\alpha|t|))} \right).
    \end{equation}
\end{mylemma}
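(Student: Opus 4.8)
This is a two-sided (necessary and sufficient) statement, so the plan is to establish an upper bound by an explicit oblivious construction and a matching lower bound by an information-theoretic obstruction. (Here I read $\varepsilon=\delta$ in the displayed bound.) For \emph{sufficiency}, the strategy is to reduce Hamiltonian simulation to polynomial approximation and then invoke the quantum singular value transformation (QSVT). Since $U$ is an $(\alpha,a,0)$-block-encoding of the Hermitian $H$, the operator $H/\alpha$ has eigenvalues in $[-1,1]$, and a degree-$d$ QSVT circuit using $\mathcal{O}(d)$ applications of $U$, $U^\dagger$ and $\mathcal{O}(1)$ extra qubits realizes $P(H/\alpha)$ for any real polynomial $P$ with parity $d \bmod 2$ and $\|P\|_{[-1,1]}\le 1$. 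The polynomial depends only on $\alpha$, $t$, $\delta$ and not on $H$, so the construction is oblivious, matching the ``unknown Hamiltonian'' hypothesis. It thus suffices to approximate, uniformly on $[-1,1]$, the functions $x\mapsto\cos(\alpha t x)$ and $x\mapsto\sin(\alpha t x)$ to error $\mathcal{O}(\delta)$ by such polynomials, and then combine them into $e^{itH}=\cos(tH)+i\sin(tH)$ with one extra ancilla and a controlled (LCU-type) combination; this, together with the parity-fixing ancilla, accounts for the two additional qubits in the $(1,a+2,\delta)$-block-encoding.

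Next I would carry out the polynomial approximation via the Jacobi--Anger expansion: writing $z=\alpha t$,
\begin{equation}
    \cos(z x) = J_0(z) + 2\sum_{k\ge 1} (-1)^k J_{2k}(z)\, T_{2k}(x), \qquad \sin(zx) = 2\sum_{k\ge 0}(-1)^k J_{2k+1}(z)\, T_{2k+1}(x),
\end{equation}
where $J_n$ are Bessel functions of the first kind and $T_n$ the Chebyshev polynomials (so $\|T_n\|_{[-1,1]}=1$; note that the series is already in the natural basis for QSVT). Truncating at degree $R$ and bounding the tail with the estimate $|J_n(z)|\le (e|z|/(2n))^{n}$ valid for $n\gtrsim|z|$, one checks that the truncation error falls below $\delta$ as soon as
\begin{equation}
    R = \Theta\!\left( |z| + \frac{\log(1/\delta)}{\log\!\big(e + \log(1/\delta)/|z|\big)} \right) = \Theta\!\left( \alpha|t| + \frac{\log(1/\delta)}{\log\!\big(e+\log(1/\delta)/(\alpha|t|)\big)} \right),
\end{equation}
which matches the claimed query count. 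A minor technical point is rescaling the truncated polynomials by a factor $1-\mathcal{O}(\delta)$ so they are strictly bounded by $1$ on $[-1,1]$ (a prerequisite for QSVT), which perturbs the error only by $\mathcal{O}(\delta)$.

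For \emph{necessity} I would argue in two parts matching the two terms of the bound. The $\alpha|t|$ term follows from a no-fast-forwarding argument: one exhibits a family of Hamiltonians, each presented as an $(\alpha,a,0)$-block-encoding, for which simulating $e^{itH}$ with $q$ uses of $U$ and then measuring solves a decision problem (e.g.\ parity/Hamming weight of a hidden bit-string, or distinguishing two Hamiltonians differing by $\sim 1/(\alpha t)$) whose bounded-error query complexity is $\Omega(\alpha|t|)$; hence $q=\Omega(\alpha|t|)$. The precision term follows from the ``quantum circuits implement polynomials'' principle: a circuit calling $U$, $U^\dagger$ a total of $q$ times and then projecting onto the block realizes a polynomial in $H/\alpha$ of degree at most $q$; if this polynomial $\delta$-approximates $x\mapsto\cos(\alpha t x)$ on the spectrum, a lower bound on the minimal degree of such an approximant — essentially the converse of the Jacobi--Anger estimate, provable via a Markov-brothers / Chebyshev-coefficient argument — forces $q=\Omega\!\big(\log(1/\delta)/\log(e+\log(1/\delta)/(\alpha|t|))\big)$. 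Combining the two parts (splitting the evolution, or a single combined reduction, to get the sum rather than the maximum) yields the stated lower bound.

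The part I expect to be the main obstacle is making the degree estimates tight enough in \emph{both} directions so the upper and lower bounds genuinely match the displayed expression — in particular the exact asymptotics of $\min\{R:\text{Jacobi--Anger tail}\le\delta\}$ in the regime where $\log(1/\delta)$ and $\alpha|t|$ are comparable, together with the matching degree lower bound, which is the delicate optimality content of the result. A secondary obstacle is the bookkeeping in the QSVT step: respecting the parity constraint on $P$, absorbing the small-norm rescaling, and assembling the real and imaginary parts into a genuine $(1,a+2,\delta)$-block-encoding of the unitary $e^{itH}$ without inflating the ancilla count.
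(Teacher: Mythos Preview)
The paper does not prove this lemma: it is stated in Appendix~\ref{sec:lemmas} as a known result quoted verbatim from \cite[Corollary 60]{gilyen2019quantum}, alongside several other block-encoding lemmas, with no argument supplied. So there is no ``paper's own proof'' to compare against.

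That said, your sketch is essentially the proof given in the original reference. Sufficiency via Jacobi--Anger truncation of $\cos(\alpha t\,\cdot)$ and $\sin(\alpha t\,\cdot)$ in the Chebyshev basis, followed by QSVT to implement the resulting bounded real polynomials of fixed parity and an LCU combination for $e^{itH}=\cos(tH)+i\sin(tH)$, is exactly how \cite{gilyen2019quantum} obtains the upper bound, and your degree estimate $R=\Theta\!\big(\alpha|t|+\log(1/\delta)/\log(e+\log(1/\delta)/(\alpha|t|))\big)$ is the correct tail analysis. The necessity side is also the standard one: the $\Omega(\alpha|t|)$ term from a no-fast-forwarding reduction (parity-type problems) and the precision term from the fact that $q$ calls to $U,U^\dagger$ realize at most a degree-$q$ polynomial of $H/\alpha$, combined with a lower bound on the degree needed to approximate $e^{i\alpha t x}$ on $[-1,1]$. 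Your identification of the delicate points --- matching the constants in the regime $\log(1/\delta)\sim\alpha|t|$, the $(1-\mathcal{O}(\delta))$ rescaling, and the ancilla bookkeeping for the $a+2$ count --- is accurate.
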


\begin{mylemma}[{\cite[Lemma 20]{an2022theory}}]\label{lemma:block:expAt}
Consider solving Equation \eqref{mod:origin} where $A$ is a positive definite Hermitian matrix with $\|A\|_2 \leq 1$ and all the eigenvalues of $A$ are within the interval $[\lambda_{\min}, 1]$ for a $\lambda_{\min} > 0$. Suppose that a $(1,l,0)$-block-encoding of $A$ is given, denoted by $U_A$. Then for any $T > 0$ and $\delta < 1/4$, a $(3,l+4,\delta)$-block-encoding of $e^{-AT}$ can be constructed using
\begin{equation}
    \mathcal{O}\left( \frac{\sqrt{\tilde{T}}}{\lambda_{\min}} \log\left( \frac{1}{\delta} \right) \log\left( \frac{\tilde{T}\log(1/\delta)}{\delta} \right) \right)
\end{equation}
queries to $U_A$, its inverse and controlled versions, and
\begin{equation}
    \mathcal{O}\left( \left( l+\frac{1}{\lambda_{\min}}\log\left( \frac{\tilde{T}\log(1/\delta)}{\delta} \right) \right) \sqrt{\tilde{T}} \log \left( \frac{1}{\delta} \right) \right)
\end{equation}
additional one- and two- qubit gates with $\tilde{T}=\max(T,\log(1/\delta))$.
\end{mylemma}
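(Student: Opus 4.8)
The plan is to obtain $e^{-AT}$ as a bounded polynomial transformation of $A$ and implement that transformation by the quantum eigenvalue transformation (the Hermitian special case of QSVT, \cite{gilyen2019quantum}) applied to the given $(1,l,0)$-block-encoding $U_A$. Since $A=A^\dagger$ has spectrum inside $[\lambda_{\min},1]\subset[-1,1]$, for any real polynomial $p$ of degree $d$ with $\|p\|_{\infty,[-1,1]}\le 1$ one can build, using $\mathcal{O}(d)$ queries to $U_A$ and its inverse and controlled versions together with $\mathcal{O}(1)$ extra ancillas, a $(1,l+\mathcal{O}(1),0)$-block-encoding of $p(A)$ (splitting $p$ into its even and odd parts and recombining them by a one-qubit LCU when $p$ has no definite parity, cf.\ Lemma \ref{lemma:block:lcu}). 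Hence the lemma reduces to an approximation-theoretic statement: \emph{there is a real polynomial $p$ of degree $d=\tilde{\mathcal{O}}(\sqrt{\tilde{T}}/\lambda_{\min})$ with $\|p\|_{\infty,[-1,1]}\le 1$ and $\|p(x)-\tfrac13 e^{-xT}\|_{\infty,[\lambda_{\min},1]}\le \delta/3$.} The normalization $\tfrac13$ is exactly the constant $\alpha=3$ in the conclusion, and asking for accuracy only on $[\lambda_{\min},1]$ rather than on all of $[-1,1]$, where $e^{-xT}$ is unbounded, is enough because $\mathrm{spec}(A)\subset[\lambda_{\min},1]$.

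To build $p$ I would combine two standard ingredients and multiply them. First, a polynomial $q$ that $(\delta/10)$-approximates $e^{-xT}$ on the positive part $[0,1]$ of the interval: the usual truncated-Chebyshev / Jacobi--Anger estimates for exponential functions (as used throughout \cite{gilyen2019quantum,low2019hamiltonian,an2022theory}) give such a $q$ of degree $d_1=\tilde{\mathcal{O}}(\sqrt{\tilde{T}})$ — the $\max$ with $\log(1/\delta)$ in $\tilde{T}$ being the degree needed to reach precision $\delta$ even when $T$ is small — at the cost that $q$ may grow moderately on $[-1,0)$. Second, a smooth-rectangle polynomial $R$ with $0\le R\le 1$ on $[-1,1]$, $|R-1|\le\delta/10$ on $[\lambda_{\min},1]$, and $|R|\le\delta/10$ on $[-1,\lambda_{\min}/2]$; the standard error-function construction gives such an $R$ of degree $d_2=\tilde{\mathcal{O}}(1/\lambda_{\min})$. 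Taking $p=\tfrac13 qR$ one checks: on $[\lambda_{\min},1]$, $p\approx\tfrac13 e^{-xT}$ within $\delta/3$; on $[-1,\lambda_{\min}/2]$ the factor $R$ keeps $|p|$ small even where $q$ is large; and on the transition window $[\lambda_{\min}/2,\lambda_{\min}]$ both $q$ and $R$ are $\mathcal{O}(1)$, so the factor $\tfrac13$ preserves $\|p\|_{\infty,[-1,1]}\le1$. The degree $d_1+d_2=\tilde{\mathcal{O}}(\sqrt{\tilde{T}}/\lambda_{\min})$ then gives, after tracking the logarithmic and $\mathcal{O}(l)$-per-query factors in the QSVT accounting (and, if needed, one round of robust oblivious amplitude amplification to restore the normalization after composing errors), the stated query and gate counts and the $(3,l+4,\delta)$-block-encoding.

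The step I expect to be the main obstacle is precisely this construction of $p$: it must \emph{simultaneously} approximate $e^{-xT}$ accurately on $[\lambda_{\min},1]$, remain globally bounded by $1$ on $[-1,1]$, and have the advertised low degree. The obstruction is intrinsic, since $e^{-xT}$ blows up to the left of $0$, so no bounded polynomial can be uniformly close to $\tfrac13 e^{-xT}$ on the whole of $[-1,1]$; the delicate quantitative point is to balance how fast the exponential approximant $q$ is allowed to grow on $[-1,0)$ against how sharply the rectangle $R$ must decay there, so that the product $\tfrac13 qR$ stays below $1$. This balance is exactly what forces the factor $1/\lambda_{\min}$ into the degree. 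A secondary, purely bookkeeping difficulty is propagating the several error contributions (the QSVT implementation error, the approximation errors of $q$ and of $R$, and the multiplicative blow-up of the error in the product $qR$) so that the total stays $\le\delta$ while $\alpha$ remains equal to $3$; this is handled by choosing all internal tolerances to be suitable constant fractions of $\delta$ and does not affect the asymptotics.
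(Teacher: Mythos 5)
This lemma is imported verbatim from \cite[Lemma 20]{an2022theory}; the present paper simply states it in Appendix~\ref{sec:lemmas} among ``known results about block-encoding'' and gives no proof of its own, so there is no in-paper argument to compare yours against. The route you sketch --- implement $e^{-AT}$ by quantum eigenvalue transformation of the given block-encoding $U_A$, using as the polynomial a rescaled exponential approximant $q$ multiplied by a smooth rectangle $R$ that suppresses the part of $[-1,1]$ lying to the left of the spectrum of $A$ --- is a standard and plausible way to obtain such a result, and you correctly single out the crux: balancing how large $q$ may become outside $[0,1]$ against how small $R$ must be there.

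What does not survive scrutiny is your own degree bookkeeping, which is internally inconsistent. You assign $q$ degree $d_1=\tilde{\mathcal{O}}(\sqrt{\tilde{T}})$ and $R$ degree $d_2=\tilde{\mathcal{O}}(1/\lambda_{\min})$ with tolerance only $\delta/10$ on $[-1,\lambda_{\min}/2]$, and then assert $d_1+d_2=\tilde{\mathcal{O}}(\sqrt{\tilde{T}}/\lambda_{\min})$. With those two numbers, $d_1+d_2=\tilde{\mathcal{O}}(\sqrt{\tilde{T}}+1/\lambda_{\min})$, which is a strictly smaller quantity than $\sqrt{\tilde{T}}/\lambda_{\min}$; if that degree were achievable the lemma would be improvable. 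The error is that $d_2$ is much too small. A degree-$d_1$ polynomial that is $\mathcal{O}(1)$-bounded on $[0,1]$ generically reaches magnitude $e^{\Theta(d_1)}$ at $x=-1$ (Chebyshev growth outside the interval), and one can check this is essentially saturated by the truncated Chebyshev series of $e^{-T(1+y)/2}$ in the shifted variable $y=2x-1$: its value at $y=-3$, i.e.\ $x=-1$, is $e^{\Theta(d_1)}=e^{\Theta(\sqrt{\tilde{T}})}$. Multiplying by an $R$ that is merely $\le\delta/10$ there leaves the product of order $\delta\, e^{\Theta(\sqrt{\tilde{T}})}\gg 1$, violating the required bound $\|p\|_{\infty,[-1,1]}\le1$. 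The fix is to demand $|R|\le e^{-\Theta(\sqrt{\tilde{T}})}$ on $[-1,\lambda_{\min}/2]$; with transition width $\Theta(\lambda_{\min})$ and this exponentially small floor, the erf-type rectangle has degree $d_2=\tilde{\mathcal{O}}\big(\lambda_{\min}^{-1}\cdot\sqrt{\tilde{T}}\big)$, which dominates $d_1$ and is exactly what produces the $\sqrt{\tilde{T}}/\lambda_{\min}$ in the lemma. So the overall plan is sound, but the $\sqrt{\tilde{T}}$ factor must live \emph{inside} the degree of $R$; it is not a mere $\tilde{\mathcal{O}}(1/\lambda_{\min})$ overhead added to $d_1$.
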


\section{Proof of equation \eqref{eqn:schro:cv:projU}}
\label{sec:proof:residue}
For $h_0, h_0^\prime \in \mathcal{H}$, $t>0$, we would like to calculate
\begin{equation}\label{eqn:append:nagy:timeUa}
    \langle U(t)h, h^\prime \rangle = \frac{1}{2\pi} \int_{\mathbb{R}} \frac{2}{\eta^2+1} \left\langle e^{-i(\eta H_1+H_2)t}h_0, h_0^\prime \right \rangle  \dif \eta \triangleq \frac{1}{2\pi} \int_{\mathbb{R}} g(\eta) \dif \eta.
\end{equation}
Note that {\it{each entry}} of the integrand on the right hand side of \eqref{eqn:append:nagy:timeUa} is analytic with respect to $\eta\neq \pm i$. We define the integral along the half circle with a radius $R$ in the lower half plane $\gamma = [-R,R] \cup \gamma_0 = [-R,R] \cup \{ Re^{i\theta}|\theta\in[-\pi,0] \}$, then from Cauchy's residue theorem,
\begin{equation}
    \frac{1}{2\pi} \int_{\gamma} g(\eta) \dif \eta = i \text{Res}(g, -i) = -\left\langle \mathcal{T} e^{- (H_1+iH_2)t}h_0, h_0^\prime \right \rangle  = -\langle V(t)h_0, h_0^\prime \rangle .
\end{equation}
Assumethat the eigenvalues of $H_1(t)$ are uniformly bounded from below by $\lambda_0 = \lambda_{\min}(H_1) \geq 0$, then
\begin{equation}
    \left|\int_{\gamma_0} g(\eta) \dif \eta\right| \leq \int_{-\pi}^{0} \frac{2}{\left|R^2e^{2i\theta}+1\right|} e^{ \lambda_0 Rt\sin\theta } |h_0||h_0^\prime| |Re^{i\theta}| \dif \theta \leq 2\pi \frac{R}{R^2-1} |h_0||h_0^\prime|,
\end{equation}
hence
\begin{equation}
    \lim_{R\rightarrow +\infty} \frac{1}{2\pi} \int_{\gamma_0} g(\eta) \dif \eta = 0,
\end{equation}
\begin{equation}\label{eqn:append:nagy:timeUa2}
    \frac{1}{2\pi} \int_{\mathbb{R}} g(\eta) \dif \eta = \lim_{R\rightarrow +\infty} \frac{1}{2\pi} \int_{[-R,R]} g(\eta) \dif \eta = -\lim_{R\rightarrow +\infty} \frac{1}{2\pi} \int_{\gamma} g(\eta) \dif \eta = \langle V(t)h_0, h_0^\prime \rangle .
\end{equation}
Substituting \eqref{eqn:append:nagy:timeUa2} into \eqref{eqn:append:nagy:timeUa} gives
\begin{equation}
    \langle U(t)h, h^\prime \rangle = \langle V(t) h_0, h_0^\prime \rangle, \quad \forall t \in \mathbb{R}, 
\end{equation}
where $h=h(\eta)\equiv h_0, h^\prime=h^\prime(\eta)\equiv h_0^\prime$. That is, $V(t)$ is the projection of the unitary operator $U(t)$ in $\mathcal{H}$, explicitly,
\begin{equation}
    V(t) = P_{\mathcal{H}} U(t) = \frac{1}{2\pi} \int_{\mathbb{R}} \frac{2}{\eta^2+1} U(t) d\eta.
\end{equation}

\section{Proof of Theorem \ref{thm:block:expAt:nonHer}}
\label{sec:proof:block:expAt:nonHer}

\begin{proof}
As described in Section \ref{sec:block}, we first make the decomposition $A=H_1 + iH_2$ in terms of Hermitian matrices, then apply the first-order Lie-Trotter formula with $\Delta t=T/K$,
\begin{equation}\label{eqn:proof:block:K}
    \mathcal{S}_1(T) = (\mathcal{S}_1(\Delta t))^K := \left( e^{-H_1T/K} e^{-iH_2T/K} \right)^K,
\end{equation}
where $K$ is large enough to approximate $e^{-AT}$. Through a similar analysis as \cite[Lemma1]{childs2021theory}, the Trotter error with $1$-norm scaling is $\mathcal{O} \left( \left( \|H_1\|_2 + \|H_2\|_2 \right)^2T^2/K \right)$ and it suffices to choose
\begin{equation}
    K = \mathcal{O} \left( \left( \|H_1\|_2 + \|H_2\|_2 \right)^2T^2/\delta^\prime \right) = \mathcal{O} \left( (\|A\|_2 T)^2/\delta^\prime \right) = \mathcal{O} \left( (s(A)\|A\|_{\max}T)^2/\delta^\prime \right)
\end{equation}
to simulate with accuracy $\delta^\prime$, where we have used the fact $\|A\|_2\leq s(A)\|A\|_{\max}$.

Since $H_1$ is Hermitian, we can construct a $(1,m+2,0)$-block-encoding $U_{H_1}$ of $H_1/(s(H_1)\|H_1\|_{\max})$ by Lemma \ref{lemma:block:hamiltonian} using $\mathcal{O}(1)$ queries. Denoting 
\begin{equation}
    T^\prime = s(H_1)\|H_1\|_{\max}T/K, \quad  H_1^\prime=H_1/(s(H_1)\|H_1\|_{\max}),
\end{equation}
then $\|H_1^\prime\|_2 \leq 1$, Lemma \ref{lemma:block:expAt} indicates that a $(3,m+6,\delta_1)$-block-encoding $U_1$ of $e^{-H_1T/K}=e^{-H_1^\prime T^\prime}$ can be constructed using
\begin{equation}\label{eqn:proof:block:Q1}
    Q_1 = \mathcal{O}\left( \frac{\sqrt{\tilde{T}^\prime}}{\lambda_{\min}(H_1^\prime)} \log\left( \frac{1}{\delta_1} \right) \log\left( \frac{\tilde{T}^\prime \log(1/\delta_1)}{\delta_1} \right) \right) = \mathcal{O}\left( \frac{s(A)\|A\|_{\max}}{\lambda_{0}(A)} \log^{1.5}\left( \frac{1}{\delta_1} \right) \log\left( \frac{\log(1/\delta_1)}{\delta_1} \right) \right)
\end{equation}
queries to $U_{H_1}$, its inverse and controlled versions, and
\begin{equation}
    G_1 = \mathcal{O}\left( \left(m + \frac{s(H_1)\|H_1\|_{\max}}{\lambda_{\min}(H_1)} \log\left( \frac{\log(1/\delta_1)}{\delta_1} \right) \right) \log^{1.5}\left( \frac{1}{\delta_1} \right) \right)
\end{equation}
additional one- and two- qubit gates.

Similarly, we can construct a $(1,m+2,0)$-block-encoding $U_{H_2}$ of $H_2/(s(H_2)\|H_2\|_{\max})$ using $\mathcal{O}(1)$ queries. Denoting 
\begin{equation}
    T^\prime = s(H_2)\|H_2\|_{\max}T/K, \quad  H_2^\prime=H_2/(s(H_2)\|H_2\|_{\max}),
\end{equation}
from the known complexity results of block-Hamiltonian simulation by Lemma \ref{lemma:block:expiHt}, in order to implement a $(1,m+4,\delta_2)$-block-encoding $U_2$ of $e^{-iH_2T/K}=e^{-iH_2^\prime T^\prime}$, it is necessary and sufficient to use a total number of
\begin{equation}\label{eqn:proof:block:Q2}
    Q_2 = \mathcal{O} \left( T^\prime + \frac{\log(1/\delta_2)}{\log(e+\log(1/\delta_2)/T^\prime)} \right) = \mathcal{O}\left( \frac{\log(1/\delta_2)}{\log(\tau/\delta^\prime) + \log\log(1/\delta_2)} \right)
\end{equation}
the unitary $U_{H_2}$, 3 controlled-U or its inverse, $\mathcal{O}(1)$ ancilla qubits and
\begin{equation}
    G_2 = \mathcal{O}\left( \frac{m \log(1/\delta_2)}{\log(\tau/\delta^\prime) + \log\log(1/\delta_2)}\right) 
\end{equation}
two-qubit gates with $\tau=s(A)\|A\|_{\max}T$. Note that $K$ is chosen sufficiently large such that $T^\prime < 1$. 

Given $U_1$ a $(3,m+6,\delta_1)$-block-encoding of $e^{-H_1T/K}$, $U_2$ a $(1,m+4,\delta_2)$-block-encoding of $e^{-iH_2T/K}$, we know $\tilde{U} = (I_{m+4}\otimes U_1)(I_{m+6}\otimes U_2)$ is a $(3, 2m+10, 3\delta_2+\delta_1)$ of $e^{-H_1T/K}e^{-iH_2T/K}$ by Lemma \ref{lemma:block:multiply}. To bound the overall error for time $T/K$ by $\delta^\prime/K>0$, it suffices to choose $\delta_1=\delta_2 = \delta^\prime/(4K) = (\delta^\prime)^2/(4\tau^2)$, therefore the construction of $\tilde{U}$ requires $(Q_1+Q_2)$ queries and $(G_1+G_2)$ two-qubit gates.

The complexity of solving equation \eqref{mod:origin} that outputs an $\delta$-approximation of $|u(T)\rangle$ with $\Omega(1)$ success probability is the number of segments $K$ times the query complexity for each segment $Q_1+Q_2$ (or gate complexity $G_1+G_2$) times the number of steps needed for oblivious amplitude amplification $\mathcal{O}\left( \|u(0)\|/\|u(T)\| \right)$. To achieve the precision $\delta$, we need to take $\delta^\prime = \mathcal{O}(\delta \left\|u(T)\right\|/\left\|u(0)\right\|)$, then we complete the proof by combining \eqref{eqn:proof:block:K}, \eqref{eqn:proof:block:Q1} and \eqref{eqn:proof:block:Q2}.

\end{proof}

\section{Proof of Theorem \ref{thm:schro:dv:expAt}}
\label{sec:proof:schro:expAt}

\begin{proof}
As described in Section \ref{sec:schro:dv}, we apply the first-order Lie-Trotter formula to approximate $U_j$
\begin{equation}
    U_{j}(T) \approx \left( e^{-i \eta_j H_1 T/K} e^{-i H_2 T/K} \right)^K
\end{equation}
with $K$ large enough. Through a similar analysis of the Trotter error as Appendix \ref{sec:proof:block:expAt:nonHer}, it suffices to choose
\begin{equation}\label{eqn:proof:schro:K}
    K = \mathcal{O} \left( \left( \|\eta_j H_1\|_2 + \|H_2\|_2 \right)^2T^2/\delta^\prime \right) = \mathcal{O} \left( (\|A\|_2 LT)^2/\delta^\prime \right) = \mathcal{O} \left( (s(A)\|A\|_{\max} LT)^2/\delta^\prime \right)
\end{equation}
to simulate with accuracy $\delta^\prime$, where we have used the fact $\|A\|_2\leq s(A)\|A\|_{\max}$ and $|\eta_j|\leq L$. 

Since $H_1$ is Hermitian, we can construct a $(1,m+2,0)$-block-encoding $U_{H_1}$ of $H_1/(s(H_1)\|H_1\|_{\max})$ using $\mathcal{O}(1)$ queries. Denoting 
\begin{equation}
    T^\prime = s(H_1)\|H_1\|_{\max} \eta_j T /K, \quad  H_1^\prime=H_1/(s(H_1)\|H_1\|_{\max}),
\end{equation}
from the known complexity results of block-Hamiltonian simulation by Lemma \ref{lemma:block:expiHt}, in order to implement a $(1,m+4,\delta_1)$-block-encoding $U_{\eta_j}$ of $e^{-i \eta_j H_1 T/K}=e^{-iH_1^\prime T^\prime}$, it is sufficient to use a total number of
\begin{equation}\label{eqn:proof:schro:Q}
    Q_1 = \mathcal{O} \left( T^\prime + \frac{\log(1/\delta_1)}{\log(e+\log(1/\delta_1)/T^\prime)} \right) = \mathcal{O}\left( \frac{\log(1/\delta_1)}{\log(L\tau/(\delta^\prime)) + \log\log(1/\delta_1)} \right)
\end{equation}
the unitary $U_{H_1}$, 3 controlled-$U_{H_1}$ or its inverse, $\mathcal{O}(1)$ ancilla qubits and
\begin{equation}
    G_1 = \mathcal{O}\left( \frac{m \log(1/\delta_1)}{\log(L\tau/(\delta^\prime)) + \log\log(1/\delta_1)} \right)
\end{equation}
two-qubit gates with $\tau=s(A)\|A\|_{\max}T$. To apply the LCU technique, we need to construct the select oracle
\begin{equation}
    \text{select}(\vec{U}) := \sum_{j=1}^{N} |j\rangle \langle j| \otimes  e^{-i \eta_j H_1 T/K},
\end{equation}
which can be written as
\begin{equation}
    \text{select}(\vec{U}) = \sum_{j=1}^{N} |j\rangle \langle j| \otimes  e^{-i (-L+2Lj/N) H_1 T/K} = e^{iLH_1T/K} \sum_{j=1}^{N} |j\rangle \langle j| \otimes  \left(e^{-i (2L/N) H_1 T/K}\right)^{j}.
\end{equation}
According to \cite[Lemma 6]{an2023linear}, the second operator can be constructed with cost $\mathcal{O}(\log(N))$ of $U_{\eta_j}$s. Therefore, the total query complexity of constructing $\text{select}(\vec{U})$ is $\mathcal{O}(Q_1\log(N))$. The complexity of constructing a $(1,m+4,\delta_2)$-blcok-encoding $U_{2}^\prime$ of $e^{-iH_2T/K}$ is similar as those in Appendix \ref{sec:proof:block:expAt:nonHer}.

Given $\text{select}(\vec{U})$, $U_{H_2}$ and $O_{coef}$ the LCU coefficient oracle, we first apply $O_{coef}\otimes O_{prep}$, then sequentially apply $\text{select}(\vec{U})$ and $U_{2}^\prime$ for $K$ times, finally apply $O_{coef}^\dagger$ on the ancilla register and evaluate it via amplitude estimation. The complexity of solving equation \eqref{mod:origin} that outputs an $\delta$-approximation of $|u(T)\rangle$ with $\Omega(1)$ success probability is the number of Trotter decomposition segments $K$ times the query complexity for each segment $\mathcal{O}(Q_1\log_2(N)+Q_2)$ times the number of steps needed for oblivious amplitude amplification $\mathcal{O}\left(\|y\| \|u(0)\|/\|u(T)\| \right)$. To achieve the precision $\delta$, one needs to take
\begin{equation}\label{eqn:proof:schro:param}
    \delta^\prime = \mathcal{O}\left(\frac{\delta \|u(T)\|}{\|u(0)\|}\right), \quad \delta_1 = \delta_2 = \mathcal{O}\left(\frac{\delta^\prime}{K}\right), \quad L = \mathcal{O}\left( \frac{1}{\delta^\prime}\right), \quad N = \frac{2L}{\Delta\eta} = \mathcal{O}\left(\frac{\tau}{(\delta^\prime)^2}\right).
\end{equation}
The estimate of $N$ comes from the global error of discretising $\eta$ to $N$ modes $\eta_j, j=1,\dots,N$. Combining \eqref{eqn:proof:schro:param}, \eqref{eqn:proof:schro:K} and \eqref{eqn:proof:schro:Q}, we obtain the total query complexity
\begin{equation}
    \tilde{\mathcal{O}} \left(  \left(\frac{\left\|u(0)\right\|}{\left\|u(T)\right\|}\right)^{4} \frac{(s(A)\|A\|_{\max}T)^2}{\delta^{3}} \right), 
\end{equation}
where $\tilde{\mathcal{O}}$ denotes $\mathcal{O}$ ignoring logarithmic terms. The gate complexity can be obtained similarly.
\end{proof}

\end{document}